\tikzstyle{vertex}=[draw,fill=black!15,circle,minimum size=18pt,inner sep=0pt]
\newtheorem{theorem}{Theorem}
\newtheorem*{remark}{Remark}
\newtheorem{lemma}{Lemma}
\definecolor{darkgreen}{rgb}{0.0, 0.5, 0.0}
\newcommand{\asymcloud}[2][.1]{%
\begin{scope}[#2]
\pgftransformscale{#1}%
\pgfpathmoveto{\pgfpoint{261 pt}{115 pt}} 
  \pgfpathcurveto{\pgfqpoint{70 pt}{107 pt}}
                 {\pgfqpoint{137 pt}{291 pt}}
                 {\pgfqpoint{260 pt}{273 pt}} 
  \pgfpathcurveto{\pgfqpoint{78 pt}{382 pt}}
                 {\pgfqpoint{381 pt}{445 pt}}
                 {\pgfqpoint{412 pt}{410 pt}}
  \pgfpathcurveto{\pgfqpoint{577 pt}{587 pt}}
                 {\pgfqpoint{698 pt}{488 pt}}
                 {\pgfqpoint{685 pt}{366 pt}}
  \pgfpathcurveto{\pgfqpoint{840 pt}{192 pt}}
                 {\pgfqpoint{610 pt}{157 pt}}
                 {\pgfqpoint{610 pt}{157 pt}}
  \pgfpathcurveto{\pgfqpoint{531 pt}{39 pt}}
                 {\pgfqpoint{298 pt}{51 pt}}
                 {\pgfqpoint{261 pt}{115 pt}}
\pgfusepath{fill,stroke}         
\end{scope}}    
\newcommand\calloutquote[2][]{%
       \pgfkeys{/calloutquote/.cd,
         width               = 5cm,
         position            = {(0,-1)},
         author              = {}}
  \pgfqkeys{/calloutquote}{#1}                   
  \node [rectangle callout,callout relative pointer={\calloutquotepos},text width=\calloutquotewidth,/calloutquote/.cd,
     #1] (tmpcall) at (-1,2) {#2};
  \node at (tmpcall.pointer){\calloutquoteauthor};    
}  
\def\url@leostyle{%
  \@ifundefined{selectfont}{\def\UrlFont{\sf}}{\def\UrlFont{\small\ttfamily}}}
\newcommand{\mygrid}{\tikz{ \draw [ultra thick, draw=black, step=1cm] (0,0) grid  (5,3) rectangle (0,0);
}
}
\newcommand{\mygridd}{\tikz{ \draw [ thick, draw=black, step=1cm] (0,0) grid  (8,0.5) rectangle (0,0);
}
}
\newcommand{\mytwogrids}{\tikz{ \draw [ thick, draw=black, step=4cm] (0,0) grid  (18,1) rectangle (0,0);
}
}
\newcommand{\myfourgrids}{\tikz{ \draw [ thick, draw=black, step=1cm] (0,0) grid  (4,0.5) rectangle (0,0);
}
}
\begin{document}

\bstctlcite{IEEEexample:BSTcontrol}

\title{In-Network Volumetric DDoS Victim Identification \\Using Programmable Commodity Switches }
\author{Damu Ding\href{https://orcid.org/0000-0001-9692-7756}{\protect\includegraphics[scale=0.1]{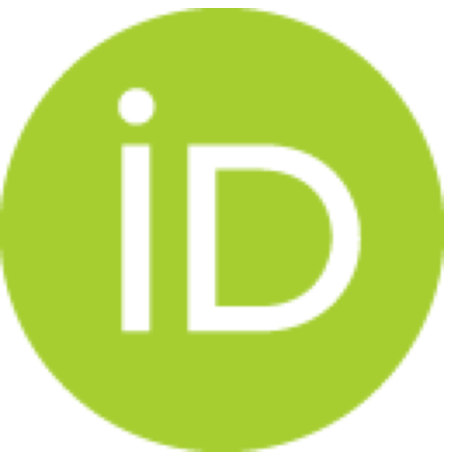}}, \IEEEmembership{Student Member, IEEE},
        Marco Savi\href{https://orcid.org/0000-0002-8193-0597}{\protect\includegraphics[scale=0.1]{orcid/icon.eps}},
        Federico Pederzolli\href{https://orcid.org/0000-0003-0557-9878}{\protect\includegraphics[scale=0.1]{orcid/icon.eps}},\\
        \vspace{-12pt}
        Mauro Campanella\href{https://orcid.org/0000-0002-0228-8701}{\protect\includegraphics[scale=0.1]{orcid/icon.eps}}, and
        Domenico Siracusa\href{https://orcid.org/0000-0002-5640-6507}{\protect\includegraphics[scale=0.1]{orcid/icon.eps}}
\thanks{Damu Ding is with Fondazione Bruno Kessler, Trento, Italy and University of Bologna, Bologna, Italy.
}
\thanks{Marco Savi is with University of Milano-Bicocca, Milano, Italy. The study was mainly done while he was with Fondazione Bruno Kessler, Trento, Italy.}
\thanks{Federico Pederzolli and Domenico Siracusa are with Fondazione Bruno Kessler, Trento, Italy.}
\thanks{Mauro Campanella is with GARR, Roma, Italy.} 
}

\maketitle

\makeatletter
\def\ps@IEEEtitlepagestyle{%
  \def\@oddfoot{\mycopyrightnotice}%
  \def\@oddhead{\hbox{}\@IEEEheaderstyle\leftmark\hfil\thepage}\relax
  \def\@evenhead{\@IEEEheaderstyle\thepage\hfil\leftmark\hbox{}}\relax
  \def\@evenfoot{}%
}
\def\mycopyrightnotice{%
  \begin{minipage}{\textwidth}
  \centering \scriptsize
  Copyright~\copyright~2021 IEEE. Personal use of this material is permitted. Permission from IEEE must be obtained for all other uses, in any current or future media, including\\reprinting/republishing this material for advertising or promotional purposes, creating new collective works, for resale or redistribution to servers or lists, or reuse of any copyrighted component of this work in other works by sending a request to pubs-permissions@ieee.org.
  \end{minipage}
}
\makeatother
\begin{abstract}
Volumetric distributed Denial-of-Service (DDoS) attacks have become one of the most significant threats to modern telecommunication networks.
However, most existing defense systems require that detection software operates from a centralized monitoring collector, leading to increased traffic load and delayed response. 
The recent advent of Data Plane Programmability (DPP) enables an alternative solution: threshold-based volumetric DDoS detection can be performed directly in programmable switches to skim only potentially hazardous traffic, to be analyzed in depth at the controller. 
In this paper, we first introduce the BACON data structure based on sketches, to estimate per-destination flow cardinality, and theoretically analyze it. 
Then we employ it in a simple in-network DDoS victim identification strategy, INDDoS, to detect the destination IPs for which the number of incoming connections exceeds a pre-defined threshold. 
We describe its hardware implementation on a Tofino-based programmable switch using the domain-specific P4 language, proving that some limitations imposed by real hardware to safeguard processing speed can be overcome to implement relatively complex packet manipulations.
Finally, we present some experimental performance measurements, showing that our programmable switch is able to keep processing packets at line-rate while performing volumetric DDoS detection, and also achieves a high F1 score on DDoS victim identification.

\end{abstract}
\begin{IEEEkeywords}
 Anomaly detection, Programmable data planes, DDoS victim identification, P4
\end{IEEEkeywords}
\section{Introduction}

Distributed Denial-of-Service (DDoS) attacks are a critical security threat to modern telecommunication networks; not only do they cripple live services for legitimate users, but also cause large operational burdens on operators, which must dedicate significant resources to detecting and mitigating them. 
As the number and size of botnets and DDoS attacks persistently increases \cite{santannajjIM2015}, so does this workload. In particular, \emph{volumetric} DDoS attacks, designed to overwhelm network and server capacity, are among the most common and dangerous DDoS attacks \cite{ramanathan2018senss}.

Many techniques to both perform and detect volumetric DDoS attacks are documented in the scientific literature.
Among the latter, a rather common feature of such attacks is the exploitation of a (large) number of (capacious) source hosts to direct a considerable amount of packets to a specific victim destination \cite{yu2013software}\cite{liu2016one}\cite{huang2017sketchvisor}. 
Attackers use seemingly legitimate TCP, UDP, or ICMP packets in volumes large enough to overwhelm network devices and servers, or deliberately incomplete packets designed to rapidly consume all available computing, storage, and transmission resources in servers. 
A majority of such attacks also make use of spoofed source IP addresses (e.g. DNS and NTP amplification DDoS attacks), that is, they forge the address that supposedly generated the requests to prevent source identification; this implies that tracking attack source IPs is usually ineffective.
On the other hand, DDoS victim addresses cannot be spoofed, and identifying the victims is a useful step for network operators to mitigate such attacks.

In legacy networks, including SNMP- \cite{van2014opennetmon} and NetFlow-based \cite{Netflow} networks, flow monitoring and anomaly/attack detection are executed on top of a centralized \emph{monitoring collector} or \emph{controller}. 
Such a collector needs to periodically analyze the sampled flow statistics retrieved from network devices to monitor the network state, which has two well-known drawbacks \cite{ding2020incrementally}: \emph{(i.)} significant communication overhead is generated between switches and the centralized monitor/controller due to frequent flow sampling, and \emph{(ii.)} large detection latency, which limits the system efficiency in detecting relatively short attacks lasting one or two minutes.
Furthermore, it limits the monitoring system ability to detect features that manifest even in sparsely sampled traffic (which may or may not be a problem in the specific case of volumetric DDoS attack detection, but is nonetheless a limitation).
The advent of \emph{data-plane programmable switches} (e.g., Tofino \cite{Tofino}) enables the migration of volumetric DDoS attack detection (and victim identification) to the switches themselves, using domain-specific programming languages like P4 \cite{bosshart2014p4}. 
Despite using a relatively high-level language, programming the behaviour of a physical switch requires specific care in the optimization of physically limited resources to maintain line-rate processing speeds.
The language specifications reflect the hardware boundaries: it features the absence of expensive and unbounded operations (e.g., division and loops), due to the limited number of pipeline stages in the processor \cite{qian2019flexgate}.
Such constraints place a ceiling on the number of operations executable on each packet to ensure line-speed processing.

This paper describes the implementation and validation of volumetric DDoS victim detection and identification directly in Tofino-based \cite{Tofino} P4-enabled commodity switches.
The proposed logic copes well with the aforementioned restrictions of data plane programmability.
That is, unlike other recent works on this subject, we actually implemented and tested our technique in a real switch instead of limiting the work to a P4 simulator, such as the Behavioral model \cite{p4simulator}.
To that end, after describing some background mathematical results and data structures required by our technique in Section~\ref{sec:background}, we propose, in Section~\ref{sec:inddos}, two main contributions: \emph{BACON Sketch} and \emph{INDDoS}. 
BACON is a new sketch (a probabilistic data structure) combining a Direct Bitmap \cite{estan2003bitmap} and a Count-min Sketch \cite{cormode2005improved}, which allows switches to estimate the number of distinct flows (i.e., packets with the same flow key) contacting the same destination host. 
INDDoS is a simple volumetric DDoS victim identification strategy built on top of BACON Sketch to identify the destination IPs contacted by a number of source IPs  greater than a  threshold, in a given time interval, completely in the data plane.
We include extensive theoretical analysis and detail the modifications required to implement our approach in physical resource-constrained P4 Tofino switches in Section \ref{subsec:impl}.
Then, Section~\ref{sec:integration} provides insights on the integration of INDDoS in a full DDoS defense system, including attack detection and consequent mitigation steps. 
Section~\ref{sec:evaluation} presents the results of an evaluation of INDDoS in a small testbed.
We show that, using optimal parameters derived from our theoretical analysis, our implementation can reach an F1 score higher than 0.95 on a real flow trace \cite{caida} captured on a 10\,Gbps backbone link, without performance degradation in the switch's packet-processing capabilities.
Finally, Section~\ref{sec:related} relates our paper to other recent works on the subject while Section~\ref{sec:conclusion} concludes it. 
The work has been carried out within the G\'EANT~\cite{GEANT} GN4-3 project.

\section{Background} \label{sec:background}
\subsection{Markov's inequality}
 Given a non-negative random variable $X$ and a positive value $a$, Markov's inequality defines a constant upper bound for the probability that satisfies $\mathbb{P}(X \ge a) \le \frac{\mathbb{E}[X]}{a}$, where $\mathbb{E}[X]$ represents the expected value of $X$. 

\subsection{Direct Bitmap}
Direct Bitmap \cite{estan2003bitmap} is a simple data structure that can be used to estimate the number of distinct flows occurring in a packet stream (also called \emph{flow cardinality}): it is based on a bit array called \emph{Bitmap register} and relies on one or more \emph{hash functions}. 
Initially, all $m$ cells in the Bitmap register are set to 0. 
When a packet arrives, its flow key is hashed: the hashed key indicates the index of the register cell to consider.
The value of the indexed cell is set to 1 if it was previously 0, otherwise it is not updated. 
Note that packets sharing the same flow key are always hashed into the same cell, while different flows are hashed to different cells unless a collision occurs.
The number of distinct flows can then be (under-)estimated by counting the number of bits with value 1 in the register.

\subsection{Count-min Sketch}
The estimation of the \emph{per-flow packet count} (i.e., number of packets carried by any network flow during an observation interval) can be performed using a \emph{Count-min Sketch} \cite{cormode2005improved}, a probabilistic and memory-efficient data structure which implements \emph{Update} and \emph{Query} operations: the former is responsible for continuously updating the sketch to count incoming packets in the switch, while the latter retrieves the estimated number of incoming packets for a specific flow. 
Count-min Sketch relies on $d$ pairwise-independent hash functions, each with an output of size $w$. 
The data structure is composed by a matrix of $d \cdot w$ counters:
the accuracy of packet count estimation in Count-min Sketch increases as $d$ or $w$ increase, and vice versa.

\section{In-network DDoS victim identification} \label{sec:inddos}
\subsection{Threat model and deployment scenario}

\subsubsection{Threat model}
In this paper, we focus on volumetric DDoS attacks against victim destinations in the network. 
For the purpose of overwhelming the available resources of the victims, an attacker exploits a large number of distributed hosts (e.g. bots in a botnet) to frequently send traffic to the target host(s) (e.g., a web server).
The attacker sources are usually spoofed to evade detection, which, coupled with the fact that each source may send only a small amount of traffic to the victim (in a stealthy volumetric DDoS attack), makes identifying attack sources (also called \emph{superspreaders}) less efficient than focusing on destinations for DDoS mitigation.
Unfortunately, there exists legitimate network events, such as flash crowds, that exhibit similar characteristics to volumetric DDoS attacks, making a suspiciously high number of sources contacting a destination a necessary but not sufficient condition to determine whether a destination is under attack.
To further discriminate DDoS attacks from flash crowds, a possible solution can be found in \cite{yu2011discriminating}, where the correlation of flows is used to determine whether they are malicious or not.

\input{scenario}
\subsubsection{Deployment scenario for the proposed DDoS detection}
We target an ISP network, for which the best placement of our DDoS detection functionality involves deploying programmable switches at the edge of the network, so that at least one switch has visibility on all flows towards each IP destination (\figurename~\ref{fig:scenario}).
Therefore, at least one switch is in the best place to estimate the number of source hosts contacting any destination host.
Once a DDoS victim is identified, ideally after a more thorough analysis step at a centralized controller, any border programmable switch can also be used to limit the traffic rate towards it. 

\subsection{BACON Sketch}
For the purpose of measuring flow cardinality directly inside switches, we combined Direct Bitmap registers \cite{estan2003bitmap} and Count-min Sketch \cite{cormode2005improved} in a new sketch, which we named BACON (BitmAp COuNt-min) Sketch for the sake of brevity. 
As shown in \figurename~\ref{fig:bacon}, in BACON Sketch the counters of Count-min Sketch are replaced with a $m$-sized Bitmap register, hence the size of BACON is $d \times w \times m$, where $d \times w$ is the size of Count-min Sketch. 
BACON Sketch enables the estimation of per-destination flow cardinality, with flows identified by different flow keys $key_{src}$, using very little memory in the switch. 
Two different flow keys are considered for each packet: the  key $key_{src}$ must include the source IP of the flow together with any subset of \{source port, destination IP and port, protocol\} without loss of generality, and the choice of $key_{src}$ depends on the requirements of operators. 
Likewise, the flow key of the destination host, denoted by $key_{dst}$, can be either the destination IP or the \{destination IP, destination port\} pair. 

Formally, BACON Sketch solves the following problem.
Given: 
\begin{itemize}
    \item a packet stream $S$
    \item a Bitmap register size $m$
    \item a number of hash functions in Count-min Sketch $d$
    \item an output size of hash functions in Count-min Sketch $w$
    \item a time interval $T_{int}$
    \item a flow key $key_{dst}$
\end{itemize}
compute $\hat{E}_{dst}$, the estimated value of $E_{dst}$ in $T_{int}$, where $E_{dst}$ is the number of flows (identified by different $key_{src}$ keys) that contact the destination host identified by $key_{dst}$.
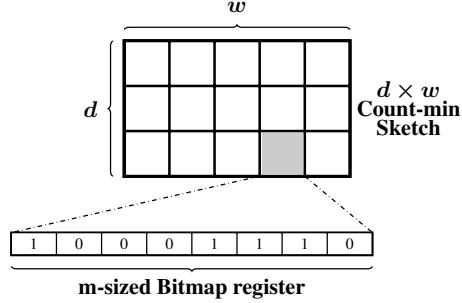
\begin{figure}[t]
\centering
\scalebox{0.6}{
\begin{tikzpicture}[]
\node(CMS) at (4.5, 2.5)[rectangle, draw, inner sep=0]{
\mygrid
};
\node (kk) at (8.3,2.5) [minimum width=2cm,minimum height=1cm, align=center] {\Large{$\bm{d\times w}$} \\\Large{\textbf{Count-min}} \\ \Large{\textbf{Sketch}}};
\node [fill=black!20!white, minimum size=0.9cm] at (5.5, 1.5) {};
\node(HLLm) at (3.5, -0.5)[rectangle, draw, inner sep=0]{
\mygridd
};
\draw[thick,decorate,decoration={brace,amplitude=3pt,mirror, raise = 1cm}]
            (-0.5,0) -- (7.5,0) ;
\node(register) at (3.5,-1.5){ \Large{\textbf{m-sized Bitmap register}}};
\node(value) at (7,-0.5){0};
\node(value) at (6,-0.5){1};
\node(value) at (5,-0.5){1};
\node(value) at (4,-0.5){1};
\node(value) at (3,-0.5){0};
\node(value) at (2,-0.5){0};
\node(value) at (1,-0.5){0};
\node(value) at (0,-0.5){1};

\draw[-, dash dot, thick] (5, 1) -- (-0.5, -0.25);
\draw[-, dash dot,thick] (6, 1) -- (7.5, -0.25);

\draw[thick,decorate,decoration={brace,amplitude=3pt}]
            (1.75, 1) -- (1.75,4) ;
\node(d) at (1.25,2.5){\Large{$\bm{d}$}};
    
\draw[thick,decorate,decoration={brace,amplitude=3pt}]
            (2, 4.25) -- (7,4.25) ;            
\node(w) at (4.5,4.75){\Large{$\bm{w}$}};
\end{tikzpicture}
}
\caption{Data structure of BACON Sketch}
\label{fig:bacon}
\end{figure}

\begin{algorithm}[t]
\caption{\textbf{BACON Sketch}}
\label{alg:bacon}
\SetKwProg{Fn}{Function}{:}{}
\KwIn{Packet stream $S$, where each packet is characterized by $key_{src}$ and $key_{dst}$}
\KwOut{Estimated number of distinct flows contacting the packet's destination $key_{dst}$ (i.e., $\hat{E}_{dst}$)} 
$d \leftarrow$ Number of hash functions $h_{cm}^{i}$ in Count-min\\
$w \leftarrow$ Output size of hash functions $h_{cm}^{i}$ in Count-min\\
$m \leftarrow$ Bitmap-register size with hash function $h_{bm}$\\
\For{Each packet in $S$ (with $key_{src}$ and $key_{dst}$)}{
    \textit{Update}($key_{src}$, $key_{dst}$)\\
    \textit{Query}($key_{dst}$)\\
}

\Fn {Update($key_{src}$, $key_{dst}$)}{
	$bucket\leftarrow h_{bm}(key_{src}) \% m$\label{line:bucket}\\
	\For{Each hash function $h_{cm}^{i}$\label{line:upd1}}{ 
	    $index  \leftarrow (h_{cm}^{i}(key_{dst})\%w) \cdot m + bucket$\\
		\If{$BACON_{i}[index] == 0$}{
		$BACON_{i}[index] \leftarrow 1$ \label{line:upd2} \tcp{row $i$}
		}
		}
}

\Fn{Query($key_{dst}$)}{
    $\hat{E}_{dst} \leftarrow 0$	\\
    \For{Each hash function $h_{cm}^{i}$\label{line:qr1}}{
            $id \leftarrow (h_{cm}^{i}(key_{dst})\%w) \cdot m$\\
            $E_{i} \leftarrow \sum_{j=id }^{id + m - 1}{BACON_{i}[j]} $\\
            		\If{$\hat{E}_{dst} == 0$}{
			$ \hat{E}_{dst} \leftarrow$ $E_{i}$\\
		}\ElseIf{$E_{i}$ $<$ $\hat{E}_{dst} $}{
			 $ \hat{E}_{dst} \leftarrow$ $E_{i}$\label{line:qr2}
		}
}
		\Return{$\hat{E}_{dst}$}
}

\end{algorithm}

Algorithm \ref{alg:bacon} shows the pseudo code of BACON Sketch.
For each incoming packet, the switch hashes the flow key $key_{src}$ with hash function $h_{bm}$ and converts the hashed value to be within range $[0, m-1]$: this value, named $bucket$, is the index in the Bitmap register (Line~\ref{line:bucket}).
Then each of the $d$ pairwise-independent hash function in BACON, denoted by $h_{cm}^{i}$, hashes the packet's $key_{dst}$ to the slot $(h_{cm}^{i}$($key_{dst})\% w)\cdot m + bucket$ in each row, setting the related register cell to 1 (Lines~\ref{line:upd1}-\ref{line:upd2}). $BACON_{i}$ refers to the $i$-th row of BACON Sketch.
Note that this way to compute the cell's index to be updated assumes that the Bitmap registers' cells in each row are progressively numbered from 0 to $m\cdot w -1$, much like in an array.
Concerning the \emph{Query} operation for any $key_{dst}$, the involved Bitmap register in row $i$ (i.e., the register indexes from $(h_{cm}^{i}(key_{dst})\%w)\cdot m $ to $h_{cm}^{i}(key_{dst})\%w)\cdot m + m - 1$) estimates the cardinality of $key_{dst}$ by computing the sum of values in the register cells (i.e., number of 1s). This is done for each row and the minimum estimated cardinality $\hat{E}_{dst}$ among all $d$ rows (i.e. $\hat{E}_{dst} = min(E_{i})$) is returned (Lines~\ref{line:qr1}-\ref{line:qr2}).

\subsection{In-network cardinality-based DDoS victim identification}
Using BACON Sketch, we propose a simple in-network volumetric DDoS detection mechanism, which we named INDDoS, aiming to identify likely DDoS victims, i.e., those hosts that are contacted by an abnormally high number of source IPs, and hence the associated network flows.

Formally, the problem is formulated as follows.\\
Given: 
\begin{itemize}
    \item a number of source IPs $n$
    \item a BACON Sketch with size $d \times w \times m$ $(m < n)$
    \item a DDoS threshold fraction $\theta$ (threshold: $\theta n$)
    \item a time interval $T_{int}$
\end{itemize}
return all destinations $key_{dst}$, named \emph{DDoS victims}, that satisfy $\hat{E}_{dst} > \theta n$ within time interval $T_{int}$, where $\hat{E}_{dst}$ is the estimated number of sources contacting $key_{dst}$ and obtained by querying the BACON Sketch. 

From this point onward, without any loss of generality, we will use $src$ (the source IP) as $key_{src}$, and $dst$ (the destination IP) as $key_{dst}$, to estimate how many flows from different source IPs are trying to contact a destination IP in a given time interval $T_{int}$.
Observe that the proposed strategy, which requires updating and querying the BACON Sketch for each incoming packet towards any destination \emph{dst} during the time interval, interacts with a remote controller only when at least one attack is detected in the same interval, otherwise no communication between controller and switches takes place. 

\subsection{Theoretical analysis}
In this section we present a theoretical analysis of the accuracy of our proposed DDoS detection strategy.

\subsubsection{Error bounds for estimated flow cardinality}

\begin{theorem}\label{th:1}
With  probability at least $1 - ({\frac{1}{2})}^{d}$, the cardinality estimation $\hat{E}_{dst}$ from BACON Sketch satisfies $ E_{dst} -2(n - m( 1 - e^{-\frac{n}{m}})) < \hat{E}_{dst}$.
\end{theorem}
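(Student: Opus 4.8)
The plan is to reduce the statement to a single-row undercount estimate and then amplify across the $d$ rows in the style of the Count-min minimum argument. First I would fix a destination $dst$ with $E_{dst}$ distinct contacting sources and fix a row $i$. Write $E_i$ for the number of $1$-cells in the $m$-cell Bitmap segment assigned to $dst$ in that row, and let $Z_i$ be the number of those cells set specifically by the $E_{dst}$ sources of $dst$. Two structural observations drive the argument: (i) $E_i \ge Z_i$, because sources contacting other destinations that collide with $dst$ under $h_{cm}^{i}$ only add further $1$-cells and never remove any, so the Count-min overcounting can only help a lower bound; and (ii) $Z_i$ is exactly the number of distinct Bitmap buckets hit by the $E_{dst}$ sources, so the deficit $E_{dst}-E_i \le E_{dst}-Z_i =: U_i$ is governed by the Bitmap undercount alone.

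Next I would bound the expected undercount. Modelling $h_{bm}$ as sending each of the $E_{dst}$ sources uniformly into $m$ buckets, the probability that a given bucket stays empty is $(1-\tfrac1m)^{E_{dst}}$, so $\mathbb{E}[Z_i] = m\bigl(1-(1-\tfrac1m)^{E_{dst}}\bigr)$ and hence $\mathbb{E}[U_i] = E_{dst} - m\bigl(1-(1-\tfrac1m)^{E_{dst}}\bigr)$. Using $(1-\tfrac1m)^{E_{dst}} \le e^{-E_{dst}/m}$ gives $\mathbb{E}[U_i] \le E_{dst} - m(1-e^{-E_{dst}/m})$. Since $f(x) = x - m(1-e^{-x/m})$ has $f'(x) = 1-e^{-x/m} \ge 0$ and is therefore nondecreasing, and since $E_{dst} \le n$, I obtain $\mathbb{E}[U_i] \le f(n) = n - m(1-e^{-n/m})$, a quantity I denote by $\mu$.

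With the expectation in hand, Markov's inequality (recalled in Section~\ref{sec:background}) applied to the non-negative variable $U_i$ yields $\mathbb{P}(U_i \ge 2\mu) \le \mathbb{E}[U_i]/(2\mu) \le \tfrac12$, so that $\mathbb{P}(E_i \le E_{dst} - 2\mu) \le \mathbb{P}(U_i \ge 2\mu) \le \tfrac12$ for every row. Finally I would invoke the independence of the $d$ rows, whose hash functions $h_{cm}^{i}$ are pairwise independent, and combine them through the minimum $\hat{E}_{dst} = \min_i E_i$ exactly as in the Count-min analysis, so that the per-row confidence of $\tfrac12$ is boosted to $1-(\tfrac12)^{d}$, giving $\mathbb{P}\bigl(E_{dst} - 2\mu < \hat{E}_{dst}\bigr) \ge 1-(\tfrac12)^{d}$, which is the claim.

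The main obstacle I anticipate is the combination step rather than the single-row estimate: the occupancy computation and the monotonicity argument bounding $\mathbb{E}[U_i]$ by $\mu$ are routine, but one must argue carefully how the minimum over the $d$ rows interacts with the per-row failure probability, and confirm that the Count-min collision overcounting genuinely leaves the lower bound intact (it can only raise each $E_i$). I would also want to be explicit about the independence used for the amplification, since it is the decoupling of the Bitmap bucketing from the per-row Count-min hashing that must justify the $(\tfrac12)^{d}$ factor.
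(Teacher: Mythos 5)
Your single-row analysis is essentially the paper's proof, and in two places it is tighter: you replace the paper's approximation $(1-\frac{1}{m})^{m} \approx e^{-1}$ with the rigorous inequality $(1-\frac{1}{m})^{E_{dst}} \le e^{-E_{dst}/m}$, and your observation that Count-min collisions only add ones (so $E_i \ge Z_i$) makes explicit what the paper waves at with ``when there are no collisions in Count-min Sketch''. Up to the per-row Markov bound $\mathbb{P}(U_i \ge 2\mu) \le \frac{1}{2}$, your argument is sound and parallels the paper's proof step for step.

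The gap is precisely the step you flagged, and your proposed resolution would fail. The Count-min analysis gains the exponent $d$ because its per-row error is one-sided \emph{upward}, so the minimum over rows is too large only when \emph{all} $d$ rows fail simultaneously --- an intersection of independent events. For a lower bound the logic reverses: the failure event $\{\hat{E}_{dst} \le E_{dst} - 2\mu\} = \bigcup_i \{E_i \le E_{dst}-2\mu\}$ is a \emph{union} over rows, and a union of events each of probability at most $\frac{1}{2}$ cannot be driven down to $(\frac{1}{2})^{d}$ by any independence assumption; independence would, if anything, push the union's probability up. Moreover, in Algorithm~\ref{alg:bacon} the bucket $h_{bm}(key_{src}) \% m$ is computed once (Line~\ref{line:bucket}) and reused in every row, so your $Z_i$ is in fact the same random variable $Z$ for every $i$: the deficits $U_i = E_{dst} - Z$ are perfectly correlated across rows, and the independence of the $h_{cm}^{i}$ governs only the overcounts $E_i - Z$, which can only help the lower bound and cannot be harvested for amplification. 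What your argument honestly establishes is $\mathbb{P}(E_{dst} - \hat{E}_{dst} \ge 2\mu) \le \frac{1}{2}$, i.e., the stated inequality with confidence $\frac{1}{2}$ rather than $1-(\frac{1}{2})^{d}$. To be fair, the paper's own proof makes the identical leap in a single sentence (``Considering that there are $d$ hash functions\dots''), so you have reproduced its reasoning faithfully, weakness included; but obtaining the claimed exponent would require a genuinely different tool --- for instance, a concentration bound on the balls-in-bins occupancy $Z$ around its mean --- not the min-based amplification borrowed from Count-min.
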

\begin{proof}
Let us start with the Bitmap register. 
Since a register cell is touched (i.e., set to 1) as soon as it is selected for one incoming element, the probability of any cell $X_{i}$ to be touched after processing one element (packet) can be expressed as:
\begin{align*}
     \mathbb{P}(X_{i} = 1) &= \frac{1}{m}
\end{align*}
so, the probability of $i$-th register cell to be untouched is:
\begin{align*}
     \mathbb{P}(X_{i} = 0) &= 1 - \mathbb{P}(X_{i} = 1) = 1 -\frac{1}{m}.
\end{align*}
After processing $E_{dst}$ different elements (where $E_{dst}$ is the true number of $src$ contacting $dst$) we then have:
\begin{align*}
\mathbb{P}(X_{i} = 0) &= (1 - \frac{1}{m})^{E_{dst}} = ((1 - \frac{1}{m})^{m})^{\frac{E_{dst}}{m}}.
\end{align*}
Since $\lim_{m\rightarrow  \infty}{(1-\frac{1}{m})^{m}} = e^{-1}$ and in our case $m$ is large (i.e., $\ge$ 128), ${(1-\frac{1}{m})^{m}} \approx e^{-1}$ and we have: 
\begin{align*}
\mathbb{P}(X_{i} = 0)\approx  e^{-\frac{E_{dst}}{m}}.
\end{align*}
Hence, the probability of a cell to be touched after $E_{dst}$ elements is: 
\begin{align*}
\mathbb{P}(X_{i} = 1) &=  1 - \mathbb{P}(X_{i} = 0) = 1- e^{-\frac{E_{dst}}{m}}.
\end{align*}
Since there are $m$ cells in each Bitmap register, the expectation of our Bitmap cardinality-estimation (i.e., the expected number of 1s in the register), for destination IP $dst$,  is:
\begin{align*}
\mathbb{E}[\hat{E}_{dst}^{Bitmap}] &= \sum_{i=1}^{m}{X_{i} \cdot \mathbb{P}(X_{i} = 1) + X_{i} \cdot \mathbb{P}(X_{i} = 0)}\\
&= \sum_{i=1}^{m}{(1 \cdot \mathbb{P}(X_{i} = 1)) + 0}\\
&= \sum_{i=1}^{m}{1- e^{-\frac{E_{dst}}{m}}} =  m(1- e^{-\frac{E_{dst}}{m}}).
\end{align*}
Note that, when there are no collisions in Count-min Sketch, $\mathbb{E}[\hat{E}_{dst}] = \mathbb{E}[\hat{E}_{dst}^{Bitmap}]$.
Since $\hat{E}_{dst}^{Bitmap} \le E_{dst}$, the difference between $\mathbb{E}[\hat{E}_{dst}]$ and $\mathbb{E}[E_{dst}]$\footnote{The expectation of $E_{dst}$ is still $E_{dst}$ since $E_{dst}$ is a constant.} is:
\begin{align*}
    f(E_{dst}) =\mathbb{E}[E_{dst}] - \mathbb{E}[\hat{E}_{dst}] &= E_{dst} - m(1- e^{-\frac{E_{dst}}{m}}).
\end{align*}
The derivative of $f(E_{dst})$, denoted by $f'(E_{dst})$, is:
\begin{align*}
    f'(E_{dst}) = 1 - e^{-\frac{E_{dst}}{m}}.
\end{align*}
Since $e^{-\frac{E_{dst}}{m}} < 1$, $f'(E_{dst})>0$. 
For this reason, $f(E_{dst})$ is monotonically increasing and $f(E_{dst}) \le f(n) = n - m( 1 - e^{-\frac{n}{m}})$.
Then, applying Markov's inequality yields:
\begin{align*}
        \mathbb{P}(E_{dst} - \hat{E}_{dst} &\ge 2(n - m( 1 - e^{-\frac{n}{m}})))\\ &\le \frac{E_{dst} -\mathbb{E}[\hat{E}_{dst}]}{2(n - m( 1 - e^{-\frac{n}{m}}))}\\
        &\le \frac{n - m( 1 - e^{-\frac{n}{m}})}{2(n - m( 1 - e^{\frac{-n}{m}}))}
        = \frac{1}{2}.
\end{align*}
Considering that there are $d$ hash functions in the Count-min Sketch part of BACON Sketch:
\begin{align*}
\mathbb{P}((E_{dst} - \hat{E}_{dst}) \ge 2(n - m( 1 - e^{-\frac{n}{m}}))) &\le  (\frac{1}{2})^{d}
\end{align*}
Thus, with probability at least $1 - ({\frac{1}{2})}^{d}$,  
\begin{align*}\label{eq1}
   \hat{E}_{dst} >  E_{dst} -2(n - m( 1 - e^{-\frac{n}{m}})). && \qedhere
\end{align*}
\end{proof}

\begin{theorem}\label{th:2}
With  probability at least $1 - ({\frac{1}{2})}^{d}$, the cardinality estimation $\hat{E}_{dst}$ from BACON Sketch satisfies $\hat{E}_{dst} \le E_{dst} + 2m(1-e^{-\frac{n}{mw}})$.
\end{theorem}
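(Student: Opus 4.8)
The plan is to mirror the structure of the proof of Theorem~\ref{th:1}, but to bound the \emph{over}-counting of $\hat{E}_{dst}$ rather than the under-counting. Over-counting cannot come from the Bitmap register itself, which only ever under-estimates, so it must originate entirely from collisions in the Count-min part: in row $i$ the register associated with $dst$ lives in column $c_i = h_{cm}^i(dst)\%w$, and it accumulates the $src$ bits not only of the $E_{dst}$ genuine flows but of \emph{every} flow whose destination also hashes to column $c_i$. The goal is therefore to quantify how many spurious $1$s these colliding (``foreign'') sources can add, in expectation, to a single row.

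First I would reuse the Bitmap expectation already derived in Theorem~\ref{th:1}: a register fed with $N$ distinct sources holds, in expectation, $m(1-e^{-N/m})$ ones. Next I would bound the expected number of distinct foreign sources reaching $dst$'s column. Since the Count-min hash functions map destinations uniformly over the $w$ columns, a foreign destination collides with $dst$ with probability $1/w$; treating each of the at most $n$ distinct sources as routed into one column through its destination hash, the expected number $C_i$ of foreign distinct sources in $dst$'s column satisfies $\mathbb{E}[C_i]\le n/w$. Writing $E_i$ for the number of $1$s in row $i$, the genuine sources contribute at most $E_{dst}$ ones while the foreign sources contribute at most $m(1-e^{-C_i/m})$ additional ones, so the per-row over-count obeys $E_i - E_{dst} \le m(1-e^{-C_i/m})$. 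Taking expectations and using the concavity of $x\mapsto m(1-e^{-x/m})$ together with $\mathbb{E}[C_i]\le n/w$ gives $\mathbb{E}[E_i - E_{dst}]\le m(1-e^{-\frac{n}{mw}})$.

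With the expected per-row over-count in hand, the finish is identical in spirit to Theorem~\ref{th:1}. Applying Markov's inequality to the per-row over-count yields
\begin{align*}
\mathbb{P}\!\left(E_i - E_{dst}\ge 2m\bigl(1-e^{-\frac{n}{mw}}\bigr)\right)\le \frac{m\bigl(1-e^{-\frac{n}{mw}}\bigr)}{2m\bigl(1-e^{-\frac{n}{mw}}\bigr)}=\frac12 .
\end{align*}
Because $\hat{E}_{dst}=\min_i E_i$, the estimate over-counts by at least $2m(1-e^{-\frac{n}{mw}})$ only when \emph{all} $d$ rows do; the rows being driven by independent hash functions, this joint event has probability at most $(\tfrac12)^d$. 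Hence with probability at least $1-(\tfrac12)^d$ we obtain $\hat{E}_{dst}\le E_{dst}+2m(1-e^{-\frac{n}{mw}})$, as claimed.

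I expect the main obstacle to be the middle step: cleanly defining the foreign-source count $C_i$ and justifying $\mathbb{E}[C_i]\le n/w$ despite the fact that the Bitmap hashes \emph{sources} while the collisions are governed by the \emph{destination} hash, and that a single source may in principle contact several destinations. Converting this expected source count into the spurious-$1$s bound $m(1-e^{-\frac{n}{mw}})$, by reusing the Theorem~\ref{th:1} formula and invoking concavity (or simply substituting $n/w$), is where the uniform, independent hashing assumptions and the $m\ge 128$ approximation $(1-\tfrac1m)^m\approx e^{-1}$ do all the work; once that expectation is pinned down, the Markov-plus-minimum amplification over the $d$ rows is routine.
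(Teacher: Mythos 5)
Your overall route is the same as the paper's: quantify the expected number of foreign sources that Count-min collisions inject into $dst$'s register in each row (at most $n/w$), convert that into an expected surplus of 1s of at most $m(1-e^{-\frac{n}{mw}})$, apply Markov's inequality row by row, and amplify over the $d$ rows to get the $(\frac{1}{2})^{d}$ failure probability. However, there is one genuine flaw: you apply Markov's inequality to the random variable $E_i - E_{dst}$, which is not non-negative. The Bitmap part of BACON only ever under-counts, so $E_i < E_{dst}$ is entirely possible; indeed, when collisions are rare (e.g., $C_i=0$) and $E_{dst}$ is not small relative to $m$, we have $\mathbb{E}[E_i] = m(1-e^{-E_{dst}/m}) < E_{dst}$, so $\mathbb{E}[E_i - E_{dst}]$ is strictly negative and your ``Markov bound'' would be a negative number, which no probability can satisfy as an upper bound. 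Markov's inequality --- as stated in the paper's own background section --- requires a non-negative variable; for a signed $X$, knowing $\mathbb{E}[X]\le c$ gives no control on $\mathbb{P}(X \ge 2c)$, so this step fails as written.

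The paper's proof is structured to sidestep exactly this. It measures the over-count not against $E_{dst}$ but against $\hat{E}_{dst}^{Bitmap}$, the number of 1s that the genuine $E_{dst}$ sources alone would set in a stand-alone Bitmap driven by the same source hash $h_{bm}$. Every bit set by a genuine source in that stand-alone Bitmap is also set in the BACON row, so $E_i - \hat{E}_{dst}^{Bitmap} \ge 0$ and Markov applies legitimately; the deterministic inequality $\hat{E}_{dst}^{Bitmap} \le E_{dst}$ is invoked only \emph{after} the probabilistic statement, outside the probability. Your argument is repaired by the same substitution, and your expectation bound survives it, since conditioned on $C_i$ one has $\mathbb{E}[E_i - \hat{E}_{dst}^{Bitmap}] = m e^{-E_{dst}/m}(1-e^{-C_i/m}) \le m(1-e^{-C_i/m})$, after which your Jensen/concavity step over $C_i$ goes through unchanged. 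A smaller wording issue in the same part of your proof: the claimed pointwise inequality $E_i - E_{dst} \le m(1-e^{-C_i/m})$ is not pointwise true (foreign sources can set up to $\min(C_i,m)$ new bits, which can exceed $m(1-e^{-C_i/m})$); it only holds in expectation over the source hash, which is what you use immediately afterwards anyway.
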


\begin{proof}
Count-min Sketch occasionally writes two different inputs to the same cell; the expected number of such collisions is $\frac{n - E_{dst}}{w}$ \cite{cormode2005improved}.
Therefore, in BACON Sketch, the actual average number of different elements being written to the same Bitmap register is $E_{dst} + \frac{n - E_{dst}}{w} < E_{dst} + \frac{n}{w}$, so the expectation of $\hat{E}_{dst}$ becomes:
\begin{align*}
\mathbb{E}[\hat{E}_{dst}] = \sum_{i=1}^{m}{\mathbb{E}[X_{i}]} &= \sum_{i=1}^{m}{1- e^{-\frac{E_{dst} + \frac{n - E_{dst}}{w}}{m}}}\\
&=  m(1- e^{-\frac{E_{dst}+ \frac{n - E_{dst}}{w}}{m}})\\
& < m(1- e^{-\frac{E_{dst}+ \frac{n}{w}}{m}}).
\end{align*}
Then, the expectation of $\hat{E}_{dst} - \hat{E}^{Bitmap}_{dst}$ is:
\begin{align*}
      \mathbb{E}[\hat{E}_{dst}-\hat{E}^{Bitmap}_{dst}]               &= \mathbb{E}[\hat{E}_{dst}] - \mathbb{E}[\hat{E}^{Bitmap}_{dst}] \\
                    &< m(1- e^{-\frac{E_{dst}+ \frac{n}{w}}{m}}) - m(1- e^{-\frac{E_{dst}}{m}})\\
                    &= m e^{-\frac{E_{dst}}{m}}(1-e^{-\frac{n}{mw}})
                    < m(1-e^{-\frac{n}{mw}}).
\end{align*}
According to Markov's inequality, 
\begin{align*}
        \mathbb{P}((\hat{E}_{dst} - \hat{E}^{Bitmap}_{dst}) &\ge 2m(1-e^{-\frac{n}{mw}}))\\ &\le \frac{\mathbb{E}[\hat{E}_{dst}] -\mathbb{E}[\hat{E}_{dst}^{Bitmap}]}{2m(1-e^{-\frac{n}{mw}})}\\
        &= \frac{m(1-e^{-\frac{n}{mw}})}{2m(1-e^{-\frac{n}{mw}})}
        = \frac{1}{2}.
\end{align*}
Since there are $d$ hash functions in Count-min Sketch:
\begin{align*}
        \mathbb{P}((\hat{E}_{dst} - \hat{E}_{dst}^{Bitmap}) \ge 2m(1-e^{-\frac{n}{mw}})) \le (\frac{1}{2})^{d}.
\end{align*}
Hence, with probability at least $1 - ({\frac{1}{2})}^{d}$, 
\begin{align*}
       \hat{E}_{dst} - \hat{E}_{dst}^{Bitmap} < 2m(1-e^{-\frac{n}{mw}}).
\end{align*}
Since $\hat{E}_{dst}^{Bitmap} \le E_{dst}$, $\hat{E}_{dst}$ satisfies:
\begin{align*}
    \hat{E}_{dst}  < 2m(1-e^{-\frac{n}{mw}}) + E_{dst}.  && \qedhere
\end{align*}
\end{proof}
\begin{remark}
The lower error bound for the cardinality estimation is $E_{dst} -2(n - m( 1 - e^{-\frac{n}{m}}))$, while the upper error bound is $E_{dst} + 2m(1-e^{-\frac{n}{mw}})$. 
This implies that we should carefully choose $m$ and $w$ instead of setting the sketch width (i.e., number of columns) as large as possible, due to their counteracting effect on estimation quality. 
\end{remark}

\subsubsection{Error bounds for DDoS detection}
\begin{lemma}\label{lemma:1}
With  probability at least $1 - ({\frac{1}{2})}^{d}$, $R \le \frac{2n}{w}$, where $R$ is the overestimation of $E_{dst}$ caused by Count-min Sketch.
\end{lemma}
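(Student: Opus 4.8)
The plan is to mirror the template already used in the proofs of Theorems~\ref{th:1} and~\ref{th:2}: compute the expectation of the relevant non-negative quantity, apply Markov's inequality to get a per-row probability bound of $\tfrac{1}{2}$, and then amplify to $(\tfrac{1}{2})^{d}$ using the $d$ independent hash functions combined with the fact that the \emph{Query} operation returns the minimum over the $d$ rows. First I would fix the interpretation of $R$: it is the number of ``foreign'' $src$ elements that a Count-min hash collision maps into the same column as $dst$, so that they end up writing into the Bitmap register assigned to $dst$ and inflate its estimate. As already recalled in the proof of Theorem~\ref{th:2}, the expected number of such collisions in a single row is $\mathbb{E}[R] = \frac{n - E_{dst}}{w} < \frac{n}{w}$, since at most $n - E_{dst}$ non-$dst$ sources can collide and each lands in the target column with probability $\frac{1}{w}$.

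Next I would apply Markov's inequality (stated in Section~\ref{sec:background}) to the non-negative random variable $R$ with threshold $\frac{2n}{w}$. This gives, for one row,
\begin{align*}
  \mathbb{P}\!\left(R \ge \tfrac{2n}{w}\right) \;\le\; \frac{\mathbb{E}[R]}{\tfrac{2n}{w}} \;<\; \frac{n/w}{2n/w} \;=\; \frac{1}{2}.
\end{align*}
So in any single row the overestimation exceeds $\frac{2n}{w}$ with probability strictly less than $\frac{1}{2}$.

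The crucial step, and the one requiring the most care, is passing from a single row to the final estimate. Because $\hat{E}_{dst}$ is the minimum of the $d$ per-row estimates, its overestimation is bounded by the \emph{smallest} per-row overestimation, so $R \ge \frac{2n}{w}$ for the returned estimate only if \emph{every} one of the $d$ rows overestimates by at least $\frac{2n}{w}$. Invoking the pairwise independence (here used as mutual independence, exactly as in Theorems~\ref{th:1} and~\ref{th:2}) of the $d$ Count-min hash functions $h_{cm}^{i}$, the per-row events multiply, yielding $\mathbb{P}(R \ge \frac{2n}{w}) \le (\frac{1}{2})^{d}$, hence $R \le \frac{2n}{w}$ with probability at least $1 - (\frac{1}{2})^{d}$. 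The main obstacle is precisely justifying this min-over-rows argument rigorously: one must argue that the overestimation contributed by the collision term propagates monotonically through the Bitmap count and the subsequent minimum, so that bounding the minimal per-row collision count suffices to bound $R$ for the reported $\hat{E}_{dst}$. This is the same structural subtlety that underlies the classical Count-min Sketch guarantee, and it is what converts the weak per-row bound of $\frac{1}{2}$ into the exponentially strong bound $(\frac{1}{2})^{d}$.
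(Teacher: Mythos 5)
Your proposal is correct and follows essentially the same route as the paper: bound $\mathbb{E}[R] \le \frac{n}{w}$ per row, apply Markov's inequality to get a per-row failure probability of $\frac{1}{2}$, and amplify to $(\frac{1}{2})^{d}$ via the minimum over the $d$ independently-hashed rows. The paper's own proof is just a compressed version of this (it writes the $(\frac{1}{2})^{d}$ bound immediately after invoking Markov, leaving the per-row-then-minimum amplification implicit exactly as in Theorems~\ref{th:1} and~\ref{th:2}), so the min-over-rows step you flag as the main subtlety is the same one the paper silently relies on.
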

\begin{proof}
The expectation of $R$ in each Bitmap register is $\mathbb{E}[R] = \frac{n-E_{dst}}{w} \le \frac{n}{w}$, applying Markov's inequality yields:
\begin{align*}
    \mathbb{P}(R \ge \frac{2n}{w}) \le (\frac{1}{2})^{d}.  && \qedhere
\end{align*}
\end{proof}

Given Lemma~\ref{lemma:1}, Theorem~\ref{th:3} reports the \emph{false negative bound} and Theorem~\ref{th:4} the \emph{false positive bound} for INDDoS.

\begin{theorem}\label{th:3}
When $E_{dst} \ge \theta n + 2(n - m(1-e^{-\frac{n}{m}}))$, INDDoS reports dst as a DDoS victim with a probability of at least $1 - ({\frac{1}{2})}^{d}$.
\end{theorem}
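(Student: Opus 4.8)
The plan is to chain the lower bound from Theorem~\ref{th:1} with the detection rule of INDDoS. First I would recall that INDDoS flags a destination $dst$ as a DDoS victim exactly when its queried estimate exceeds the threshold, that is, when $\hat{E}_{dst} > \theta n$. So it suffices to show that the stated hypothesis forces $\hat{E}_{dst} > \theta n$ with probability at least $1 - (\frac{1}{2})^{d}$.

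Next I would invoke Theorem~\ref{th:1}, which guarantees that with probability at least $1 - (\frac{1}{2})^{d}$ the estimate satisfies
\begin{align*}
\hat{E}_{dst} > E_{dst} - 2(n - m(1 - e^{-\frac{n}{m}})).
\end{align*}
Working on this same high-probability event, I would then substitute the assumed bound $E_{dst} \ge \theta n + 2(n - m(1-e^{-\frac{n}{m}}))$ into the right-hand side. The two correction terms $2(n - m(1-e^{-\frac{n}{m}}))$ cancel, leaving $\hat{E}_{dst} > \theta n$, which is exactly the detection condition. Hence $dst$ is reported as a DDoS victim with probability at least $1 - (\frac{1}{2})^{d}$.

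The argument is essentially a one-line substitution, so there is no genuine analytic obstacle. The only point worth keeping in mind is that the detection threshold $\theta n + 2(n - m(1-e^{-\frac{n}{m}}))$ has been engineered precisely so that the additive slack coming from the Bitmap underestimation in Theorem~\ref{th:1} is absorbed; the quantity $2(n - m(1-e^{-\frac{n}{m}}))$ is the worst-case downward error established there. The probability $1 - (\frac{1}{2})^{d}$ is inherited verbatim from Theorem~\ref{th:1}, since the entire argument takes place on the same event on which that theorem's bound holds, and no additional union bound or independent randomness is introduced.
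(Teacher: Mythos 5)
Your proposal is correct and matches the paper's own proof: both invoke the lower bound of Theorem~\ref{th:1} and substitute the hypothesis $E_{dst} \ge \theta n + 2(n - m(1-e^{-\frac{n}{m}}))$ so that the slack term cancels, yielding $\hat{E}_{dst} > \theta n$ on the same high-probability event. If anything, your write-up is cleaner than the paper's, whose final ``if and only if'' phrasing obscures the same one-line substitution you spell out explicitly.
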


\begin{proof}
A victim should be reported if $E_{dst}>\theta n$.
By Theorem~\ref{th:1}, $\hat{E}_{dst} \ge E_{dst} - 2(n - m(1-e^{-\frac{n}{m}})) \ge \theta n$ with probability at least $1 - ({\frac{1}{2})}^{d}$. Therefore, $dst$ is reported as a DDoS victim with the same probability if and only if $\hat{E}_{dst} \ge E_{dst} -  2(n - m(1-e^{-\frac{n}{m}}))$.
\end{proof}

\begin{theorem}\label{th:4}
When $E_{dst} \le \frac{2}{w}n$ and $\theta \ge \frac{4}{w}$, BACON Sketch reports $dst$ as a victim with a probability of at most $ ({\frac{1}{2})}^{d}$.
\end{theorem}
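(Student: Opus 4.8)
The plan is to control the reported estimate $\hat{E}_{dst}$ from above and show that, under the two hypotheses, it can exceed the detection threshold $\theta n$ only inside the low-probability failure event already isolated in Lemma~\ref{lemma:1}. Recall that INDDoS flags $dst$ as a victim precisely when $\hat{E}_{dst} > \theta n$, so it suffices to prove that $\hat{E}_{dst} \le \theta n$ holds with probability at least $1 - (\frac{1}{2})^{d}$.

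First I would decompose the sources of the estimate. The Bitmap register queried for $dst$ is set by the $E_{dst}$ genuine sources contacting $dst$, plus the distinct elements that Count-min collisions spuriously route into the same register; the latter is exactly the quantity $R$ of Lemma~\ref{lemma:1}. Since the number of $1$s in a Bitmap register never exceeds the number of distinct elements hashed into it (each distinct element sets at most one cell), I get the deterministic bound $\hat{E}_{dst} \le E_{dst} + R$. This is the only structural fact needed, and it is essentially the definition of $R$ as the Count-min overestimation together with the fact that the Bitmap underestimates.

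Next I would plug in the two quantitative assumptions. The hypothesis $E_{dst} \le \frac{2}{w}n$ bounds the genuine contribution directly, while Lemma~\ref{lemma:1} bounds the collision term: with probability at least $1 - (\frac{1}{2})^{d}$ we have $R \le \frac{2n}{w}$. On this good event, combining the two inequalities gives $\hat{E}_{dst} \le E_{dst} + R \le \frac{2n}{w} + \frac{2n}{w} = \frac{4n}{w} \le \theta n$, where the final step uses exactly the condition $\theta \ge \frac{4}{w}$. Hence on the good event $dst$ is \emph{not} reported, so a false report can occur only on the complementary event, whose probability is at most $(\frac{1}{2})^{d}$, which is the claimed bound.

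The one step that deserves care — and the only real obstacle — is making the bound $\hat{E}_{dst} \le E_{dst} + R$ precise given that the \emph{Query} routine returns the \emph{minimum} over the $d$ rows. I would note that taking the minimum only helps, since $\hat{E}_{dst} \le E_i$ for every row $i$, and that the factor $(\frac{1}{2})^{d}$ in Lemma~\ref{lemma:1} already accounts for this minimization across the $d$ pairwise-independent hash functions (a single-row Markov bound gives probability $\frac{1}{2}$, and all $d$ rows must fail simultaneously). With that probabilistic content outsourced to Lemma~\ref{lemma:1}, everything remaining is elementary arithmetic.
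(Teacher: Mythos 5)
Your proposal is correct and follows essentially the same route as the paper's proof: both rest on the decomposition $\hat{E}_{dst} \le E_{dst} + R$, the two hypotheses $E_{dst} \le \frac{2}{w}n$ and $\theta \ge \frac{4}{w}$, and Lemma~\ref{lemma:1}'s bound $\mathbb{P}(R \ge \frac{2n}{w}) \le (\frac{1}{2})^{d}$. The only difference is presentational: the paper chains inequalities inside probabilities, $\mathbb{P}(\hat{E}_{dst} \ge \theta n) \le \mathbb{P}(E_{dst} + R \ge \theta n) \le \dots \le \mathbb{P}(R \ge \frac{2n}{w}) \le (\frac{1}{2})^{d}$, whereas you argue on the complementary good event $\{R \le \frac{2n}{w}\}$ --- the same argument in contrapositive form, with your explicit justification of $\hat{E}_{dst} \le E_{dst} + R$ and of the role of the row-minimum being a welcome clarification of what the paper leaves implicit.
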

\begin{proof}
\begin{align*}
    \mathbb{P}(\hat{E}_{dst} \ge \theta n) 
    &\le \mathbb{P}(E_{dst} + R \ge \theta n) \\
    &\le \mathbb{P}(R + \frac{2n}{w} \ge \theta n) \qquad (\textit{due to } E_{dst} \le \frac{2}{w}n)\\
    &= \mathbb{P}(R \ge (\theta - \frac{2}{w})n) \\
    &\le \mathbb{P}(R \ge (\frac{4}{w} - \frac{2}{w})n) \qquad (\textit{due to } \theta \ge \frac{4}{w})\\
   &=\mathbb{P}(R \ge \frac{2}{w}n) \le ({\frac{1}{2})}^{d} \qquad (\textit{by } Lemma~\ref{lemma:1}).  \qedhere
\end{align*}
\end{proof}
\begin{remark}
The false negative and positive bounds for INDDoS show that increasing the BACON Sketch height (i.e., number of rows) improves the victim identification performance (with equal column parameters).
\end{remark}

\section{Implementation in commodity switches} \label{subsec:impl}
\begin {figure}[t]
\centering
\scalebox{0.45}{
\begin{tikzpicture}
\centering

\node(stream) at (-1.5,2.5)[draw, align=center]  {\huge{\textbf{Packet}} \\ \huge{\textbf{stream $S$}}};
\node (HLL) at (3.5,2.5) [draw,thick,minimum width=2cm,minimum height=1cm, align=center] { \Large{$\bm{bucket = }$}\\ \Large{$\bm{h_{bm}(key_{src})\% m}$}  };

\node(CMS) at (13.5, 2.5)[rectangle, draw, inner sep=0]{
\mygrid
};
\node (kk) at (14.75,4.5) [minimum width=2cm,minimum height=1cm] {\Large{\textbf{BACON Sketch}}};

\draw [->](stream) -- (HLL) node [midway, above, sloped] (TextNode) {\Large{$\bm{key_{src}}$}};
\draw [-](6.8, 2.5) -- (6.8, 3.5) ;
\node (block1)[fill=black!20!white, minimum size=0.9cm] at (12.5, 3.5) {};

\draw [->](6.8, 3.5) -- (12.5, 3.5) node [left=3.6cm, above] (TextNode) {\large{$\bm{h_{cm}^{1}(key_{dst}), bucket}$}};
\node [fill=black!20!white, minimum size=0.9cm] at (11.5, 2.5) {};
\draw [->](HLL) -- (11.5, 2.5) node [left=2.6cm, above, sloped] (TextNode) {\large{$\bm{h_{cm}^{2}(key_{dst}), bucket}$}};
\draw [->](HLL) -- (11.5, 2.5) node [midway, below, sloped] (TextNode) {\large{$\bm{\cdots}$}};

\node [fill=black!20!white, minimum size=0.9cm] at (14.5, 1.5) {};

\draw [-](6.8, 2.5) -- (6.8, 1.5) ;
\draw [->](6.8, 1.5) -- (14.5, 1.5) node [left=5.6cm, above, sloped] (TextNode) {\large{$ \bm{ h_{cm}^{d}(key_{dst}), bucket}$}};

\node(HLLm) at (12.5, -0.5)[rectangle, draw, inner sep=0]{
\mygridd
};
\draw[thick,decorate,decoration={brace,amplitude=3pt,mirror, raise = 1cm}]
            (8.5,0) -- (16.5,0) ;
\node(register) at (12.5,-1.5){\Large{\textbf{An m-sized Bitmap register}}};
\draw[->,thick] (13, 0.25) -- (13,-0.25);
\node(bucket) at (13, 0.37){\Large{$\bm{bucket}$}};
\node(value) at (13,-0.5){1};
\draw[-, dash dot, thick] (14, 1) -- (8.5, -0.25);
\draw[-, dash dot,thick] (15, 1) -- (16.5, -0.25);

\node(count) at (13.5, 7.5)[rectangle, draw, inner sep=0]{
\mygrid
};
\node (counterN) at (14.75,9.5) [minimum width=2cm,minimum height=1cm] {\Large{\textbf{Count-min Sketch}}};
\node (b1)[fill=black!20!white, minimum size=0.9cm] at (12.5, 8.5) {\textbf{+0/1}};
\node (b2)[fill=black!20!white, minimum size=0.9cm] at (11.5, 7.5) {\textbf{+0/1}};
\node (b3)[fill=black!20!white, minimum size=0.9cm] at (14.5, 6.5) {\textbf{+0/1}};

\node(es1)[draw, align=center] at (8, 8.5){\Large{\textbf{Estimated  cardinality}} \large{$\bm{E_{1}}$}};
\node(es2)[draw, align=center] at (8, 7.5){\Large{\textbf{Estimated  cardinality} $\bm{E_{2}}$}};
\node(es3)[draw, align=center] at (8, 6.5){\Large{\textbf{Estimated  cardinality} $\bm{E_{d}}$}};
\node[align=center] at (8, 7){$\bm{\cdots}$};

\draw[->](b1)--(es1);
\draw[->](b2)--(es2);
\draw[->](b3)--(es3);

\draw [->](16.5, -0.5) edge[bend right=40, draw=blue] node [left] {} (b3);
\draw [->](12.5, 3.5)edge[bend right, draw=blue] node [left] {} (b1);
\draw [->](11.5, 2.5)edge[bend left=10, draw=blue] node [left] {} (b2);
\draw[thick,decorate,decoration={brace,amplitude=3pt, raise = 0.7cm}]
            (5.8,6) -- (5.8,9) ;
\node(min) at (4.5,7.5){$\Large{\bm{\hat{E}_{dst}}}$};
\node(checkTd)[draw, align=center] at (2.75, 7.5){\Large{$\bm{\hat{E}_{dst} =}$} \\ \Large{$\bm{\theta n + 1?}$}}; 
\draw[->] (min) -- (checkTd);

\node(controller)[draw,minimum height=1cm] at (-1.5, 7.5){\huge \textbf {Controller}}; 
\draw[->] (checkTd) -- (controller)node [midway, align=center] (TextNode) {\Large{\textbf{Report}}\\\Large{$\bm{key_{dst}}$}};
\end{tikzpicture}
}
\caption{Scheme of  INDDoS including Update (bottom) and Query (top) operations on BACON Sketch as implemented in the commodity switch}
\label{DDoS}
\end{figure}
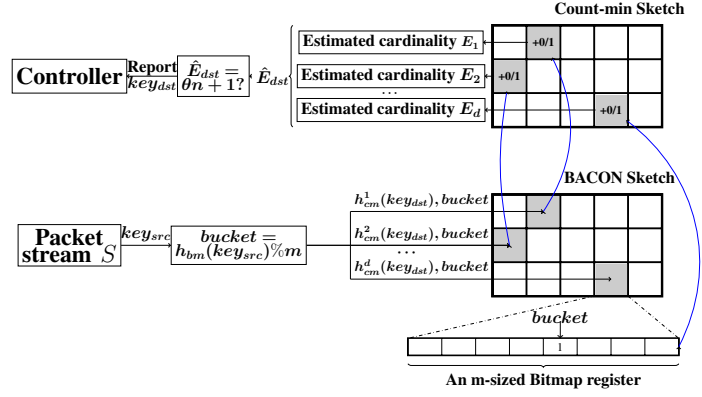
\figurename~\ref{DDoS} shows a schematic representation of our in-network DDoS detection implementation within P4-enabled commodity switches. 
We implemented INDDoS with 680 lines of P4\_16 code, released as open source at \cite{P4INDDoS}.
Here we report the details of the implementation of BACON Sketch and INDDoS in the Tofino-based hardware switch.
\subsection{Tofino-based switch data plane architecture}
\begin{figure}[t]
\centering
\begin{adjustbox}{width=0.6\linewidth}
\begin{tikzpicture}[]
\node(TM) at (0, 0)[rectangle, draw, inner sep=1, line width=1mm,  minimum width = 2.5cm, minimum height=5cm, align=center]{\textbf{\Large{Traffic}} \\ \textbf{\Large{manager}}};

\node(in1) at (-2.5, 2)[rectangle, draw, inner sep=1, line width=1mm,  minimum width = 2.5cm, minimum height=1cm,fill=black!20!white]{Pipe 1};
\node(in2) at (-2.5, 0)[rectangle, draw, inner sep=1, line width=1mm,  minimum width = 2.5cm, minimum height=1cm,fill=black!20!white]{Pipe $k$};
\node(in3) at (-2.5, -2)[rectangle, draw, inner sep=1, line width=1mm,  minimum width = 2.5cm, minimum height=1cm,fill=black!20!white]{Pipe $M$};
\node(in4) at (-2.5, -3)[align=center]{\textbf{Ingress}\\\textbf{Pipeline}};
\node(in4) at (-4.3, -3)[align=center]{\textbf{Ingress}\\\textbf{port}};

\draw[-, dashed](-1.25, -2) -- (-1.25, -3.5);
\draw[-, dashed](-3.75, -2) -- (-3.75, -3.5);

\node(ic1)  at(-2.5, 1) {\bm{$\cdots$}};
\node(ic2)  at(-2.5, -1) {\bm{$\cdots$}};

\node(p1) at (-4, 2.4)[rectangle, draw, inner sep=1, line width=1mm,  minimum width = 0.5cm, minimum height=0.2cm,fill=black]{};
\node(p2) at (-4, 2)[rectangle, draw, inner sep=1, line width=1mm,  minimum width = 0.5cm, minimum height=0.2cm,fill=black]{};
\node(p3) at (-4, 1.6)[rectangle, draw, inner sep=1, line width=1mm,  minimum width = 0.5cm, minimum height=0.2cm,fill=black]{};

\node(p4) at (-4, 0.4)[rectangle, draw, inner sep=1, line width=1mm,  minimum width = 0.5cm, minimum height=0.2cm,fill=black]{};
\node(p5) at (-4, 0)[rectangle, draw, inner sep=1, line width=1mm,  minimum width = 0.5cm, minimum height=0.2cm,fill=black]{};
\node(p6) at (-4, -0.4)[rectangle, draw, inner sep=1, line width=1mm,  minimum width = 0.5cm, minimum height=0.2cm,fill=black]{};

\node(p7) at (-4, -1.6)[rectangle, draw, inner sep=1, line width=1mm,  minimum width = 0.5cm, minimum height=0.2cm,fill=black]{};
\node(p8) at (-4, -2)[rectangle, draw, inner sep=1, line width=1mm,  minimum width = 0.5cm, minimum height=0.2cm,fill=black]{};
\node(p9) at (-4, -2.4)[rectangle, draw, inner sep=1, line width=1mm,  minimum width = 0.5cm, minimum height=0.2cm,fill=black]{};

\node(out1) at (2.5, 2)[rectangle, draw, inner sep=1, line width=1mm,  minimum width = 2.5cm, minimum height=1cm,fill=black!20!white]{Pipe 1};
\node(out2) at (2.5, 0)[rectangle, draw, inner sep=1, line width=1mm,  minimum width = 2.5cm, minimum height=1cm,fill=black!20!white]{Pipe $k$};
\node(out3) at (2.5, -2)[rectangle, draw, inner sep=1, line width=1mm,  minimum width = 2.5cm, minimum height=1cm,fill=black!20!white]{Pipe $M$};
\node(out4) at (2.5, -3)[align=center]{\textbf{Egress}\\\textbf{Pipeline}};
\node(in4) at (4.3, -3)[align=center]{\textbf{Egress}\\\textbf{port}};

\draw[-, dashed](1.25, -2) -- (1.25, -3.5);
\draw[-, dashed](3.75, -2) -- (3.75, -3.5);

\node(oc1)  at(2.5, 1) {\bm{$\cdots$}};
\node(oc2)  at(2.5, -1) {\bm{$\cdots$}};

\node(op1) at (4, 2.4)[rectangle, draw, inner sep=1, line width=1mm,  minimum width = 0.5cm, minimum height=0.2cm,fill=black]{};
\node(op2) at (4, 2)[rectangle, draw, inner sep=1, line width=1mm,  minimum width = 0.5cm, minimum height=0.2cm,fill=black]{};
\node(op3) at (4, 1.6)[rectangle, draw, inner sep=1, line width=1mm,  minimum width = 0.5cm, minimum height=0.2cm,fill=black]{};

\node(op4) at (4, 0.4)[rectangle, draw, inner sep=1, line width=1mm,  minimum width = 0.5cm, minimum height=0.2cm,fill=black]{};
\node(op5) at (4, 0)[rectangle, draw, inner sep=1, line width=1mm,  minimum width = 0.5cm, minimum height=0.2cm,fill=black]{};
\node(op6) at (4, -0.4)[rectangle, draw, inner sep=1, line width=1mm,  minimum width = 0.5cm, minimum height=0.2cm,fill=black]{};

\node(op7) at (4, -1.6)[rectangle, draw, inner sep=1, line width=1mm,  minimum width = 0.5cm, minimum height=0.2cm,fill=black]{};
\node(op8) at (4, -2)[rectangle, draw, inner sep=1, line width=1mm,  minimum width = 0.5cm, minimum height=0.2cm,fill=black]{};
\node(op9) at (4, -2.4)[rectangle, draw, inner sep=1, line width=1mm,  minimum width = 0.5cm, minimum height=0.2cm,fill=black]{};

\node at (-4.8, 1.6){\textbf{pkt}};
\draw[-, blue, line width=0.5mm](-4.5, 1.6) -- ( 0, 1.6);
\draw[-, blue, line width=0.5mm](0, 1.6) -- (1.2, -0.4);
\draw[->, blue, line width=0.5mm](1.2, -0.4) -- (4.5, -0.4);

\draw[thick,decorate,decoration={calligraphic brace,amplitude=5pt}, line width=1.25pt](-3.75,2.75) -- (-1.25,2.75);

\draw[thick,decorate,decoration={calligraphic brace,amplitude=5pt}, line width=1.25pt](5.25,2.75) -- (12.75,2.75);

\draw[-, line width=0.5mm](-2.5, 3.75) -- (9, 3.75);
\draw[-, line width=0.5mm](-2.5, 3.75) -- (-2.5, 2.85);
\draw[-, line width=0.5mm](9, 3.75) -- (9, 2.85);

\node(header) at (6, 0)[rectangle, draw, dashed, inner sep=1,  minimum width = 1.25cm, minimum height=5cm,fill=black!20!white, align=center]{\textbf{PHV:} \\  \textbf{Headers} \\ \textbf{Metadata}
};

\node(stage1) at (7.75, 0)[rectangle, draw, inner sep=1,  line width=1mm, minimum width = 1.25cm, minimum height=5cm, align=center]{\textbf{Match} \\ \textbf{Action}\\ \textbf{Table}
};

\node(stage2) at (9.5, 1.25)[rectangle, draw, inner sep=1,  line width=1mm, minimum width = 1.25cm, minimum height=2.3cm, align=center]{\textbf{Counter} \\ \textbf{Meter}\\ \textbf{Register}
};
\node(stage22) at (9.5, -1.25)[rectangle, draw, inner sep=1,  line width=1mm, minimum width = 1.25cm, minimum height=2.3cm, align=center]{\textbf{ALU} 
};
\node(stagep) at (10.75, 0){\bm{$\cdots$}};
\node(stage3) at (12, 0)[rectangle, draw, inner sep=1,  line width=1mm, minimum width = 1.25cm, minimum height=5cm, align=center]{\textbf{Match} \\ \textbf{Action}\\ \textbf{Table}
};

\node(label1) at (7.75, -3)[align=center]{\textbf{Stage 1}};
\node(label1) at (9.5, -3)[align=center]{\textbf{Stage 2}};
\node(label1) at (12, -3)[align=center]{\textbf{Stage N}};
\draw[->,  line width=0.5mm](6, -3.5) -- (12, -3.5);

\end{tikzpicture}
\end{adjustbox}
\caption{Tofino-based switch data plane architecture}
\label{fig:arch}
\end{figure}
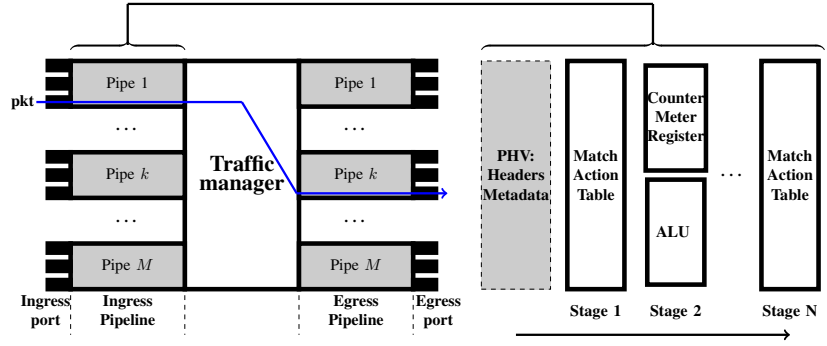
\figurename~\ref{fig:arch} shows the architecture of programmable commodity switches with Tofino ASIC (Application-Specific Integrated Circuit).
There are multiple \emph{pipes} (each composed by an \emph{ingress} and \emph{egress pipeline}) in the switch, and several ports are associated with one ingress or egress pipeline. 
Programs executed in different pipes are independent in terms of memory and computational resources.
For instance, register values in pipe 1 cannot be read by programs running in pipe 2.
When packets enter the switch, they are first processed by the input port's ingress pipeline, then, after crossing the switching matrix, are processed again by the output port's egress pipeline.
Each pipeline contains a limited number of \emph{stages}, each including one or more code \emph{blocks} (i.e., sets of operations) that are applied to each packet in sequence.
Boundaries on the number of operations executed within each stage exist to ensure that the ASIC can process packets at line rate, irrespective of the custom logic being implemented.
Blocks can contain operations including \emph{(i.)} applying match-action tables for customized packet processing, \emph{(ii.)} reading or writing counters/meters/registers for counting or storing packets, and \emph{(iii.)} calls to Arithmetic Logic Units (ALUs) for local computations. 
Note that each stage has limited hardware resources, such as memory size (Static and Ternary Random Access Memory, i.e., SRAM and TCAM) and number of ALUs.
Our implementation of INDDoS fits entirely in the ingress pipeline, leaving the egress pipeline free for other uses as discussed in Section \ref{subsec:mitigation}.

\subsection{Implementation of pairwise independent hash functions}
\begin{table}[t]
\caption{Properties of hash functions}
\label{table:hash}
\centering
\begin{adjustbox}{width=0.45\textwidth}
\begin{tabular}{|c|c|c|c|c|}
\hline
\textbf{Hash function name} & \textbf{poly} & \textbf{Reversed} & \textbf{init} & \textbf{xor}\\
\hline
CRC32 & 0x104C11DB7 & True & 0 & 0xFFFFFFFF\\
\hline
CRC32c  & 0x11EDC6F41 & True & 0 & 0xFFFFFFFF\\
\hline
CRC32d & 0x1A833982B & True & 0 & 0xFFFFFFFF\\
\hline
CRC32q & 0x1814141AB & False & 0 & 0\\
\hline
CRC32mpeg & 0x104C11DB7 & False & 0xFFFFFFFF & 0\\
\hline
\end{tabular}
\end{adjustbox}
\end{table}	
Since our P4-enabled commodity switch does not natively provide support for pairwise independent hash functions, 
we varied the usual \emph{poly}, \emph{reverse}, \emph{initial value}, and \emph{xor} parameters in the available embedded CRC32 function to produce a set of suitable pairwise independent hash functions, such as CRC32c, CRC32d, CRC32q and CRC32mpeg (see Table~\ref{table:hash}). 
These hash functions guarantee that the hashed values of the same flow key are independent.

\subsection{BACON Sketch - Implementation of updates}
We implemented BACON Sketch with $d$ rows of $w \times m$-sized registers. 
Considering that we cannot use more than two arithmetic operations to compute the input of a register (due to the atomic action rule \cite{qian2019flexgate}), we cannot calculate the $index$ of the cell to be updated as $(h_{cm}^{i}(dst)\%w) \cdot m + bucket$ (see Alg.~\ref{alg:bacon}).
Thus, we express $index$ by concatenating $bucket$ and $h_{cm}^{i}$ as a binary number (see \figurename~\ref{fig:baconT}), where the least significant bits $index[0:\log_{2}(m)-1]$ represent $bucket$, while the rest (i.e., $index[\log_{2}(m): \log_{2}(m) + \log_{2}(w) -1]$) represent $h_{cm}^{i}(dst)$.
In order to assure that $\log_{2}(w)$ and $\log_{2}(m)$ are integers, $w$ and $m$ must be powers of two.

The maximum size of registers in the  switch is $2^{17}$, which means that if $\log_{2}(m) + \log_{2}(w)$ is larger than 17, the concatenated $index$ is truncated to 17 bits. 
In order to overcome this limitation, it is possible, as shown in \figurename~\ref{fig:baconT}, to initialize up to $\log_{2}(m) + \log_{2}(w) - 17$ registers in the switch, labeled from 0 to $\log_{2}(m) + \log_{2}(w) - 18$. 
The most significant bits of $index$ are the label of the registers that should be updated. 
For instance, if $index[17:] = k$, the index $index[0:16]$ at the register labeled $k$ will be updated to 1. 
Unfortunately, this approach requires few additional pipeline stages in the switch due to using more registers.

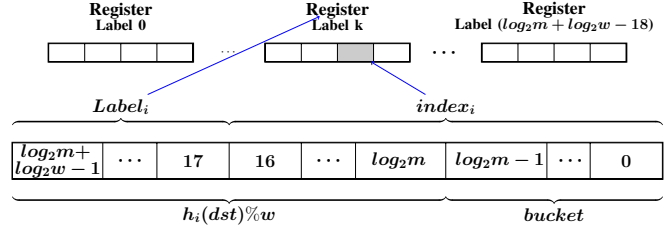
\begin {figure}[t]
\centering
\scalebox{0.48}{
\begin{tikzpicture}[]
\node (grey) [fill=black!20!white, minimum width= 1cm, minimum height=0.5cm] at (2, 0.5) {};
\node(reg1) at (1.5, 0.5)[rectangle, draw, inner sep=0]{
\myfourgrids
};
\node(reg1) at (-4.5, 0.5)[rectangle, draw, inner sep=0]{
\myfourgrids
};
\node (kk) at (-1.5, 0.5) [minimum width=2cm,minimum height=1cm, align=center] {$\cdots$};
\node (reg_label0) at (-4.5, 1.5) [align=center] {\Large{\textbf{Register}} \\ \large{\textbf{Label 0}}};

\node (reg_label1) at (1.5, 1.5) [align=center] {\Large{\textbf{Register}} \\ \large{\textbf{Label k}}};
\node (kk) at (4.5, 0.5) [minimum width=2cm,minimum height=1cm, align=center] {\Large{$\bm{\cdots}$}};
\node (reg_label2) at (7.5, 1.5) [align=center] {\Large{\textbf{Register}} \\ \large{\textbf{Label ($\bm{log_{2}{m}+log_{2}{w}-18}$)}}};

\node(reg2) at (7.5, 0.5)[rectangle, draw, inner sep=0]{
\myfourgrids
};
\node(hash) at (1.5, -2.5)[rectangle, draw, inner sep=0]{
\mytwogrids
};
\node (s1) at (9.5, -2.5) [ align=center] {\Large{$\bm{0}$}};
\node (s2) at (8, -2.5) [ align=center] {\Large{$\bm{\cdots}$}};
\draw[-](7.3, -3) -- (7.3, -2);
\node (s3) at (6, -2.5) [ align=center] {\Large{$\bm{log_{2}{m} - 1}$}};
\node (s4) at (3.2, -2.5) [ align=center] {\Large{$\bm{log_{2}{m}}$}};
\draw[-](2, -3) -- (2, -2);
\node (s5) at (1.3, -2.5) [ align=center] {\Large{$\bm{\cdots}$}};
\node (s6) at (-0.5, -2.5) [ align=center] {\Large{$\bm{16}$}};
\node (s7) at (-2.5, -2.5) [ align=center] {\Large{$\bm{17}$}};
\draw[-](-1.5, -3) -- (-1.5, -2);
\node (s8) at (-4.2, -2.5) [ align=center] {\Large{$\bm{\cdots}$}};
\node (s8) at (-6.3, -2.5) [ align=center] {\Large{$\bm{log_{2}{m} + }$}\\\Large{ \bm{$log_{2}{w}- 1$}}};
\draw[-](-5, -3) -- (-5, -2);

\draw[thick,decorate,decoration={calligraphic brace,amplitude=5pt}, line width=1.25pt](10.5,-3.5) -- (4.5,-3.5);
\node(bucket) at (7.5,-4){\Large{\bm{$bucket$}}};

\draw[thick,decorate,decoration={calligraphic brace,amplitude=5pt}, line width=1.25pt](4.5,-3.5) -- (-7.5,-3.5);
\node(bucket) at (-1.5,-4){\Large{$\bm{h_{i}(dst)\%w}$}};
\draw[thick,decorate,decoration={calligraphic brace,amplitude=5pt, mirror, reverse path}, line width=1.25pt](-1.5,-1.5) -- (10.5,-1.5) ;
\node(index) at (4.5,-1){\Large{$\bm{index_{i}}$}};
\draw[->, blue](index) -- (grey);
\draw[thick,decorate,decoration={calligraphic brace,amplitude=5pt, mirror, reverse path}, line width=1.25pt](-7.5,-1.5) -- (-1.5,-1.5) ;
\node(label) at (-4.5,-1){\Large{$\bm{Label_{i}}$}};

\draw [->, blue](label) -- (1, 1.5);

\end{tikzpicture}
}
\caption{BACON Sketch updates to overcome the switch's hardware limitations}
\label{fig:baconT}
\end{figure}
\subsection{BACON Sketch - Implementation of queries}
Calculating the sum of values in a Bitmap register is a costly operation inside a switch, since it needs to iteratively load and compute the sum of many individual values. 
For this reason, we used an \emph{auxiliary} Count-min Sketch to directly store the updated sum of each Bitmap register (see Fig. \ref{DDoS}). 
This Count-min Sketch is composed of $d \times w$ counters, each associated with the number of 1s in the corresponding Bitmap register of BACON Sketch. 
This sketch is updated as follows: when a packet arrives, the $h_{cm}^{i}(dst)$-th counter may \emph{(i.)} stay the same if $index_{i}$ in row $i$ of BACON Sketch is already 1 or \emph{(ii.)} increase by 1 if the same index of BACON Sketch is 0. 
The updated value $E_{i}$ in the counter $index_{i}$ is the same as the queried value in row $i$ of BACON Sketch. 
Finally, the flow cardinality of $dst$ is obtained by using \emph{if-else statements} to take the minimum of estimated cardinalities $E_{i}$ in each row of the auxiliary sketch. 

\subsection{Implementation of INDDoS}
If a queried flow cardinality $\hat{E}_{dst}$ is equal to $\theta n + 1$ for one or more destinations, the switch's data plane prepares a short \emph{digest} packing the 4 bytes of the detected destination IP(s), and sends it to the controller.
The equality prevents generating multiple digests for the same $dst$ in one time interval.
At the end of a time interval the switch resets all registers and starts a new round of DDoS victim identification.

\subsection{Limitations hindering the implementation of other solutions}
In this subsection, we briefly explain why competing state-of-the-art sketch-based per-flow cardinality estimation algorithms cannot currently be implemented in P4-enabled programmable hardware switches:
\subsubsection{Virtual HyperLogLog (vHLL) \cite{xiao2017cardinality}}
we were not able to implement this strategy for three reasons: \emph{(i.)} we could not find a way to count tailing 0s for the Update operation in HyperLogLog; \emph{(ii.)} 
the required harmonic mean for querying cardinality is currently not implementable since P4 does not support the calculation of inverse numbers;
\emph{(iii.)} 
vHLL needs to access several register cells to query the cardinality of a single flow but, in our hardware, the same register cannot be accessed more than once per packet.
\subsubsection{SpreadSketch \cite{tangspreadsketch} (combination of Multiresolution Bitmap \cite{estan2003bitmap} and Count-min Sketch \cite{cormode2005improved})}
we were not able to implement this strategy for two reasons: \emph{(i.)} P4 does not support the computation of logarithms, which is necessary for Multiresolution Bitmap to estimate the cardinality (we proposed a logarithm estimation algorithm in P4 in \cite{ding2019estimating}, but currently it can only be implemented in Behavioral model and not in hardware); \emph{(ii.)} arithmetic operations on the same metadata, in our hardware, cannot occur more than twice, while more are required here.

Thus, unfortunately, we cannot perform a direct comparison of our strategy with the state of the art since the solutions proposed in literature do not meet the stringent hardware constraints of current Tofino-based programmable switches, and therefore they cannot be entirely implemented in the switch pipeline. 
We are aware that, in general terms, BACON Sketch is a rather simple solution for per-flow cardinality estimation and more refined strategies have already been proposed. However, as pointed out earlier, BACON is the only solution that can be implemented in existing commodity programmable switches.


\section{Potential integration of INDDOS in a full DDoS defense mechanism} \label{sec:integration}
In this section we explain how a DDoS defense system could benefit from the early detection of potential DDoS traffic in the data plane, as performed by INDDoS, as well as how the latter could benefit from the supervision of the former.

\subsection{Setting the DDoS detection threshold}
The performance of INDDoS depends on selecting an appropriate threshold fraction $\theta$, which determines the minimum number of sources contacting the same destination that trigger an alarm, discriminating between legitimate and abnormal traffic.
In principle, to strike a good balance between false positives and false negatives, $\theta$ should be chosen so that $\theta n$  \emph{(i.)} is larger than the largest flow cardinality of any host in normal conditions (i.e., not under attack), and \emph{(ii.)} is not so large that it fails to detect some attacks. Additionally, the threshold should be dynamically adapted to traffic changes.

Even though the definition of a detailed strategy for dynamic threshold setting is beyond the scope of this paper, we provide some indications on how it could be designed. In the simplest case, it could be set by mirroring and analyzing a large trace of legitimate traffic sampled in each switch. Alternatively, considering a more complex DDoS detection function at the controller, it could be set via a slow negative feedback control loop: start with a relatively high threshold, and slowly decrease it over time; at some point, the switch will start flagging some destinations as potentially under attack, but the controller could determine whether those are false positives, and if so raise the threshold back up.
Finally, when $\theta$ changes, the BACON sketch size should be tweaked according to our theoretical analysis, in order to maximize the detection performance of INDDoS.
In different deployment scenarios that attacks can traverse multiple disjoint paths, a controller could implement a network-wide strategy tweaking $\theta$ on the switches accordingly.

\subsection{DDoS attack mitigation inside programmable switches} \label{sec:ddos_mitigation}
INDDoS enables the early detection of DDoS attack victims in the data plane pipeline that can be exploited to immediately mitigate them.
Once INDDoS has identified a possible DDoS victim and sent a digest to the controller, two simple DDoS victim-based mitigation operations could be triggered in the data plane, i.e., \emph{Drop} and \emph{Rate limit}:
\begin{itemize}
    \item \textbf{Drop}: 
    the controller automatically configures, in all programmable switches in the network, the relevant egress Access Control List (ACL) with identified attack traffic features (e.g. source, destination IPs, ports) in order to drop the packets belonging to likely malicious flows.
    \item \textbf{Rate limit}: a meter is used to aggregate traffic towards the identified victim. If the rate is larger than a pre-defined threshold, the traffic rate is limited.
\end{itemize}

Although we already implemented both operations in P4, we are keenly aware that neither can actually implement proper DDoS mitigation alone.
For that, external aid from the controller or an additional, more precise and computationally expensive detection mechanism (operating on the limited subset of suspicious flows identified by INDDoS) is needed, as outlined in the next subsection.

\subsection{DDoS attack mitigation outside programmable switches} \label{subsec:mitigation}
More refined and complex strategies, making use of external servers, have been proposed in literature for DDoS attack mitigation.
Here we describe a few and explain how they can benefit from INDDoS.

Bohatei \cite{fayaz2015bohatei} is a strategy that spins up virtual machines (VMs) in servers for the identification, analysis and mitigation of suspicious traffic as needed: once suspicious traffic is detected, a resource manager determines the type, number, and location of VMs to be instantiated, and such traffic is steered to them for further analysis (and possibly mitigation) with minimal impact on users' perceived latency.
Another approach, named Poseidon \cite{zhang2020poseidon}, performs DDoS mitigation by combining the capabilities of programmable switches and of sets of external servers. The additional mitigation functionalities provided by the servers, with respect to those provided by programmable switches, significantly improves the performance with respect to DDoS defense: Poseidon can mitigate sixteen different types of attacks exploiting different protocols (i.e., ICMP, TCP, UDP, HTTP). 

However, both Bohatei and Poseidon consider a ``DDoS-defense-as-a-service scenario", that is, they assume that the DDoS victim is known a-priori. INDDoS is complementary to these approaches since it can effectively detect and identify likely DDoS victims, and hence route only a manageable subset of network traffic towards these detection services.






\section{Experimental evaluation} \label{sec:evaluation}
\begin {figure}[t]
\centering
\scalebox{0.7}{
\begin{tikzpicture}

\node[inner sep=0pt] (switch1) at (2,0){\includegraphics[scale = 0.5]{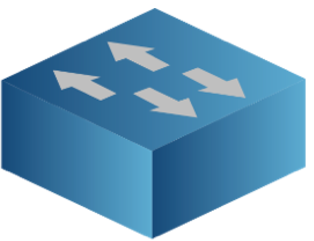}};
\node[inner sep=0pt, align=center] (switch-name) at (2,-1.5){\large{\textbf{Ports: 33 x 100Gbps}} \\ \large{\textbf{Programmable commodity}} \\\large{\textbf{switch  with Tofino ASIC}}};
\calloutquote[width=0.5*\linewidth,position={(2,-1)},fill=orange,rounded corners,align=center]{\large{\textbf{INDDoS}}}

\node[inner sep=0pt] (cmd) at (4.5,2){\includegraphics[scale = 0.6]{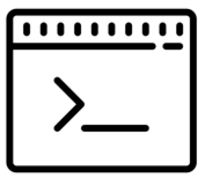}};
\node[inner sep=0pt, align=center] at (3,2.5){\large{\textbf{Digest}}};
\draw[->, thick, darkgreen](switch1)|-(cmd);
\node[inner sep=0pt, align=center] (switch-name) at (6.2,2){\large{\textbf{Command}}\\\large{\textbf{line}}\\ \large{\textbf{interface}}};

\node[inner sep=0pt] (server1) at (-1.5,-1){\includegraphics[scale = 0.5]{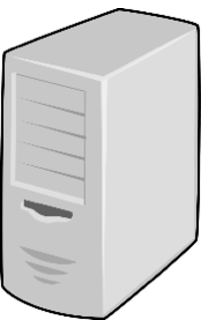}};
\node[inner sep=0pt, align=center] (switch-name) at (-2,-2){\large{\textbf{Traffic sender}}};

\node[inner sep=0pt] (server2) at (5.5,-1){\includegraphics[scale = 0.5]{imgs/server.eps}};
\node[inner sep=0pt, align=center] (switch-name) at (6,-2){\large{\textbf{Traffic receiver}}};

\draw[thick,-](-2, -0.6) edge[bend left=140, draw=black] node [left] {}  (1.2,0);
\draw[thick,-](6, -0.6) edge[bend right=150, draw=black] node [left] {}  (2.8,0);

\draw[thick, -](-2,0) edge[bend left=20, draw=orange,->] node [left] {}  (1.5,0.5);

\draw[thick, -](2.5, 0.5) edge[bend left=20, draw=orange,->] node [left] {}  (6.5,-0.2);
\node[inner sep=0pt, align=center] (traffic) at (-0.5,1){\large{\textbf{Traffic}}};
\node[inner sep=0pt, align=center] (traffic) at (4.5,1){\large{\textbf{Traffic}}};

\filldraw[orange] (-2,-0.6) circle (2pt);
\node[inner sep=0pt, align=center] (port1) at (-3.2,-0.6){\large{\textbf{Port:}} \\ \large{\textbf{10Gbps}}};

\filldraw[orange] (6,-0.6) circle (2pt); 
\node[inner sep=0pt, align=center] (port2) at (7.2,-0.6){\large{\textbf{Port:}} \\ \large{\textbf{10Gbps}}};

\filldraw[orange] (1.2,0) circle (2pt); 
\filldraw[orange] (2.8,0) circle (2pt); 

\end{tikzpicture}
}
\caption{The physical testbed for the experiments on DDoS victim identification}
\label{fig:testbed}
\end{figure}
We implemented INDDoS in a commodity Edgecore Wedge-100BF-32X switch equipped with Barefoot Tofino 3.3 Tbps ASIC \cite{Tofino}, which supports up to 32 100\,Gbps ports. 
Due to the high cost of 100\,Gbps interfaces, we connected the switch to two servers (Intel(R) Xeon(R) CPU E3-1220 V2 @ 3.10GHz, 16\,GB RAM) using 10\,Gbps Ethernet interfaces, as shown in \figurename~\ref{fig:testbed}.
We also implemented a P4-based \emph{simple forwarding} strategy for comparison purposes. 
Packets are sent across the switch via Tcpreplay \cite{tcpreplay}. 

\subsection{Evaluation metrics and settings}

\subsubsection{Testing flow traces}
In our first experiment (\textit{Exp 1}) we used, as done in previous works on DDoS detection \cite{huang2017sketchvisor}\cite{electronics9030413}, a 50s passive CAIDA flow trace \cite{caida} collected from a 10Gbps backbone link, which we divided into 10 time intervals. 
Each 5s time interval contains around 2.3 million packets and 60 thousand distinct source IPs.
Note that, although the trace did not contain actual DDoS traffic, we were still able (by using a low, sensitive threshold) to successfully detect outliers, despite the number of incoming connections for such cases being far smaller than in actual volumetric DDoS scenarios.
In other words, our experiment is a pressure test: if INDDoS can detect such outliers accurately, by properly setting a threshold between legitimate and DDoS traffic, it should have little trouble in properly distinguishing actual attack traffic where such difference is more marked.

In a second experiment (\textit{Exp 2}) we then considered real DDoS attacks, namely from Booter \cite{santannajjIM2015}.
Booter is a class of on-demand services that provide illegal support to launch DDoS attacks targeting websites and networks.
We considered four 50s Booter DNS-amplification DDoS attack traces (i.e. booter 1, 4, 6, 7), which are the ones with the highest number of attack source IPs among all the available traces in \cite{santannajjIM2015}.
This allows us to investigate the other extreme with respect to just considering the CAIDA trace without attacks (Exp 1). 
Table~\ref{table:flow} reports the salient properties of the chosen traces. 
We split each of them into 10 time intervals as for the CAIDA trace, and appended each 5s attack trace at the end of its respective legitimate trace.
The attack targets are the destinations specified in the Booter traces (one per trace), and the attacked destination IPs are not present in the CAIDA normal traffic.
In this way we generated four traces containing a single attack, as well as a fifth \emph{Mixed} trace containing all four attacks simultaneously along with the legitimate CAIDA traffic.
Note that in typical DNS-amplification DDoS attacks the number of usable amplifying reflectors is smaller than that of bots and usable reflectors: this means that we are considering the lowest number of source IPs that might be expected from a volumetric (non-semantic) attack, and thus we are not posing ourselves in the most favorable condition (i.e., with the strongest possible outliers).

\begin{table}[t]
\caption{Properties of DDoS flow traces \cite{santannajjIM2015}}
\label{table:flow}
\centering
\begin{adjustbox}{width=0.4\textwidth}
\begin{tabular}{|c|c|c|}
\hline
\textbf{DDoS trace name} & \textbf{Packets per second} & \textbf{Attack source IPs}\\
\hline
Booter 6 & $\sim$ 90000 & 7379\\
\hline
Booter 7  & $\sim$ 41000 & 6075\\
\hline
Booter 1 & $\sim$ 96000 & 4486\\
\hline
Booter 4 & $\sim$ 80000 & 2970\\
\hline
\end{tabular}
\end{adjustbox}
\end{table}	

\subsubsection{Metrics}
We chose \emph{recall} $R$ and \emph{precision} $Pr$ as the key metrics to evaluate INDDoS, defined as follows:
  \vspace{0pt}
  \begin{align*}
	R &= \frac{Count_{DDoS victim}^{detected/true}}{Count_{DDoS victim}^{detected/true} +Count_{DDoS victim}^{undetected/true}}\\
		Pr &= \frac{Count_{DDoS victim}^{detected/true}}{Count_{DDoS victim}^{detected/true} +Count_{DDoS victim}^{detected/false}} 
\end{align*}
The count of DDoS victims is obtained via the command line interface provided by our Tofino-based switch.

In our evaluations, we considered the well known \emph{F1 score} ($F1$) as a compact metric incorporating both precision and recall, and thus measuring the accuracy of our strategy.
It is defined as:
\begin{equation*}
    F1 = \frac{2 \cdot Pr \cdot R}{Pr + R}
\end{equation*}
All results are mean values computed across 10 time intervals. 

\subsubsection{Tuning parameters}
The default BACON Sketch size $d \times w \times m$ is $3\times 1024\times 1024$, and the DDoS threshold fraction $\theta$ is set to 0.5\% as per \cite{huang2017sketchvisor} in the first experiment, and to 1\% in the second, to better separate legitimate and DDoS traffic.

\begin{table}[t]
\caption{Comparison of INDDoS performance as a function of BACON Sketch parameters and with Spread Sketch \cite{tangspreadsketch}}
\label{table:stage}
\centering
\begin{adjustbox}{width=0.4\textwidth}
\begin{tabular}{|c|c|c|c|}
\hline
\textbf{BACON Sketch size ($d \times w \times m$)} & \textbf{Recall} &\textbf{Precision} & \textbf{F1 score}\\
\hline
$3 \times 1024 \times 1024$ & 0.96 & 0.99 & 0.97 \\
\hline
\hline
$1 \times 2048 \times 1024$ & 0.98 & 0.54 & 0.70 \\
\hline
$1 \times 1024 \times 2048$ & 0.94 & 0.38 & 0.54 \\
\hline
$5 \times  1024 \times 512$ & 0.12 & 1.0 & 0.22 \\
\hline
$5 \times 512 \times 1024$ & 0.96 & 0.89 & 0.92 \\
\hline
\hline
\textbf{Spread Sketch size ($d \times w \times m$)} & \textbf{Recall} &\textbf{Precision} & \textbf{F1 score}\\
\hline
$3 \times 1024 \times 1024$ & 0.92 & 0.94 & 0.93 \\
\hline
\end{tabular}
\end{adjustbox}
\end{table}	

\begin{figure*}[t]
  \centering

\subfigure[Sensitivity analysis to $d$]{
{
\resizebox{0.3\linewidth}{0.205\textwidth}{
\begin{tikzpicture}[font=\large]

\definecolor{color0}{rgb}{0.12156862745098,0.466666666666667,0.705882352941177}
\definecolor{color1}{rgb}{1,0.498039215686275,0.0549019607843137}
\definecolor{color2}{rgb}{0.172549019607843,0.627450980392157,0.172549019607843}

\begin{axis}[
legend cell align={left},
legend style={fill opacity=0.8, draw opacity=1, text opacity=1, at={(0.97,0.03)}, anchor=south east, draw=white!80.0!black},
tick align=outside,
tick pos=left,
x grid style={white!69.01960784313725!black},
xlabel={Number of hash functions \(\displaystyle d\)},
xmajorgrids,
xmin=0.9, xmax=3.1,
xtick style={color=black},
xtick={1, 2, 3},
y grid style={white!69.01960784313725!black},
ymajorgrids,
ymin=-0.0495, ymax=1.0395,
ytick style={color=black}
]
\addplot [semithick, color0, mark=*, mark size=3, mark options={solid,fill=white,draw=red}]
table {%
1 0.96
2 0.96
3 0.96
};
\addlegendentry{Recall}
\addplot [semithick, color1, mark=x, mark size=3, mark options={solid,fill=white,draw=red}]
table {%
1 0.35
2 0.92
3 0.99
};
\addlegendentry{Precision}
\addplot [semithick, color2, mark=triangle*, mark size=3, mark options={solid,fill=white,draw=red}]
table {%
1 0.51
2 0.95
3 0.97
};
\addlegendentry{F1 score}
\end{axis}

\end{tikzpicture}
    \label{fig:sub:d}
    }
    }
    }
\subfigure[Sensitivity analysis to $w$]{
{\resizebox{0.3\linewidth}{0.205\textwidth}{
\begin{tikzpicture}[font=\large]

\definecolor{color0}{rgb}{0.12156862745098,0.466666666666667,0.705882352941177}
\definecolor{color1}{rgb}{1,0.498039215686275,0.0549019607843137}
\definecolor{color2}{rgb}{0.172549019607843,0.627450980392157,0.172549019607843}

\begin{axis}[
legend cell align={left},
legend style={fill opacity=0.8, draw opacity=1, text opacity=1, at={(0.97,0.03)}, anchor=south east, draw=white!80.0!black},
tick align=outside,
tick pos=left,
x grid style={white!69.01960784313725!black},
xlabel={Output size \(\displaystyle w\)},
xmajorgrids,
xmin=83.2, xmax=1068.8,
xtick style={color=black},
xtick={128, 256, 512, 1024},
y grid style={white!69.01960784313725!black},
ymajorgrids,
ymin=-0.0495, ymax=1.0395,
ytick style={color=black}
]
\addplot [semithick, color0, mark=*, mark size=3, mark options={solid,fill=white,draw=red}]
table {%
128 0.75
256 0.88
512 0.96
1024 0.96
};
\addlegendentry{Recall}
\addplot [semithick, color1, mark=x, mark size=3, mark options={solid,fill=white,draw=red}]
table {%
128 0.02
256 0.02
512 0.79
1024 0.99
};
\addlegendentry{Precision}
\addplot [semithick, color2, mark=triangle*, mark size=3, mark options={solid,fill=white,draw=red}]
table {%
128 0.04
256 0.05
512 0.86
1024 0.97
};
\addlegendentry{F1 score}
\end{axis}

\end{tikzpicture}
    \label{fig:sub:w}
    }
    }
    }
\subfigure[Sensitivity analysis to $m$]{
{\resizebox{0.3\linewidth}{0.205\textwidth}{
\begin{tikzpicture}[font=\large]

\definecolor{color0}{rgb}{0.12156862745098,0.466666666666667,0.705882352941177}
\definecolor{color1}{rgb}{1,0.498039215686275,0.0549019607843137}
\definecolor{color2}{rgb}{0.172549019607843,0.627450980392157,0.172549019607843}

\begin{axis}[
legend cell align={left},
legend style={fill opacity=0.8, draw opacity=1, text opacity=1, at={(0.97,0.03)}, anchor=south east, draw=white!80.0!black},
tick align=outside,
tick pos=left,
x grid style={white!69.01960784313725!black},
xlabel={Bitmap register size \(\displaystyle m\)},
xmajorgrids,
xmin=83.2, xmax=1068.8,
xtick style={color=black},
xtick={128, 256, 512, 1024},
y grid style={white!69.01960784313725!black},
ymajorgrids,
ymin=-0.0495, ymax=1.0395,
ytick style={color=black}
]
\addplot [semithick, color0, mark=*, mark size=3, mark options={solid,fill=white,draw=red}]
table {%
128 0
256 0
512 0.17
1024 0.96
};
\addlegendentry{Recall}
\addplot [semithick, color1, mark=x, mark size=3, mark options={solid,fill=white,draw=red}]
table {%
128 0
256 0
512 0.96
1024 0.99
};
\addlegendentry{Precision}
\addplot [semithick, color2, mark=triangle*, mark size=3, mark options={solid,fill=white,draw=red}]
table {%
128 0
256 0
512 0.28
1024 0.97
};
\addlegendentry{F1 score}
\end{axis}

\end{tikzpicture}
    \label{fig:sub:m}
    }
    }
    }
\caption{Sensitivity analysis of INDDoS to the parameters of BACON Sketch}
  \label{fig:sens}
\end{figure*}
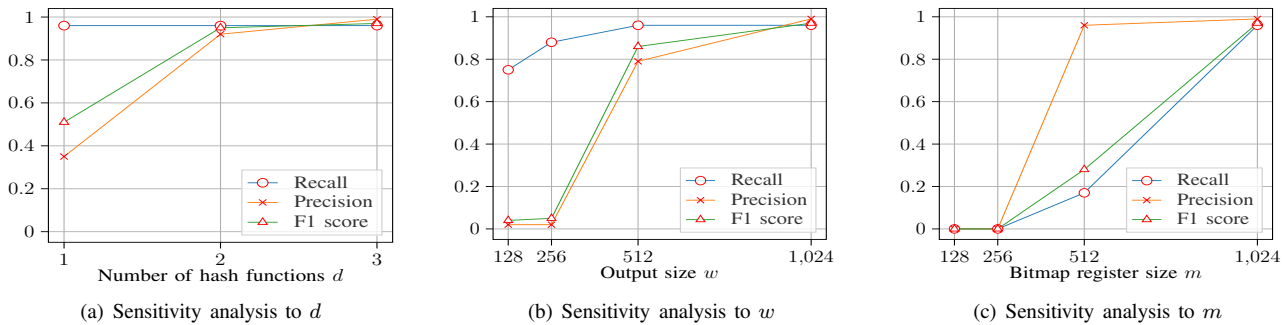

\subsection{Exp 1: evaluation of DDoS victim identification accuracy} \label{sec:evaluation_victim}
According to Theorems 1 and 2,  the Bitmap register size $m$ impacts both upper and lower bounds on the quality of BACON Sketch's estimation: increasing $m$ increases the distance between the upper bound of the estimation and the true flow cardinality, but also reduces that between the lower bound and the true value.
This implies that $m$ should be neither too large nor too small.
In our flow trace, the largest flow cardinality in each time interval is within $[2^9, 2^{10}]$, so we set $m$ to 1024. 
Conversely, larger output sizes $w$ generate lower false negative bounds and do not affect the false positive bound, but in order to apply the false positive bound proved in Theorem~\ref{th:4}, $w$ should satisfy $w \ge \frac{4}{\theta}$.
Therefore, given $\theta = 0.5\%$ in our case, $w$ should be larger than $\frac{4}{0.5\%} = 800$.
However, due to hardware limitations, we cannot use arbitrarily large values for $w$; in fact, the only feasible parameter combinations that satisfy all conditions on $m$ and $w$, while not exceeding the available pipeline stages, are: $3 \times 1024 \times 1024$ and $1 \times 2048 \times 1024$. 
Considering that $1 \times 2048 \times 1024$ only uses 1 hash function,  the estimated flow cardinality has a high chance (i.e., $\frac{1}{2}$) to be out of both upper bound and lower bound. 
Note that the number of hash functions $d$ = 2 was not considered since 2 hash functions support the same largest register size (i.e., $w \times m$ = $1024 \times 1024$) as 3 functions within the available pipeline stages. 
Likewise, 4 hash functions support the same largest register size (i.e., $w \times m$ = $512 \times 1024$) as 5 functions, but neither can satisfy Theorem 4. 
Thus, according to the theoretical analysis, the best choice of the sketch size should be $3 \times 1024 \times 1024$ (hence our default choice).

Table~\ref{table:stage} shows the performance of INDDoS for different BACON Sketch sizes. 
In all cases listed in the table, all available pipeline stages are required to implement the strategy, while the occupied memory is much lower than available switch's memory. 
Results show good performance on both recall and precision for  $3 \times 1024 \times 1024$, which translates to a high F1 score. 
With only one hash function in BACON Sketch, INDDoS shows high recall, i.e., it is able to identify most DDoS victims, but low precision, meaning that a lot of false identifications take place. 
When instead $m=2048$, the recall of $1 \times 1024 \times 2048$ is as high as for $1 \times 2048 \times 1024$, but its precision decreases by a third. 
In contrast, when $m=512$ and the sketch size is $5 \times  1024 \times 512$, the recall significantly decreases leading to a much lower F1 score than for $5 \times 512 \times 1024$. 
These two results indicate that a larger $m$ causes higher recall but lower precision, and vice versa.
The recall for sketch size $3 \times 1024 \times 1024$ is comparable to the one for size $5 \times 512 \times 1024$,  but the precision is 10 percentage points higher. 
Note that in higher-speed networks, where the largest flow cardinality during a time interval may exceed 1024, INDDoS can still work properly (even in currently available switches) by shrinking the time interval.

\subsubsection{Comparison with the state of the art} 
We considered SpreadSketch \cite{tangspreadsketch} as an alternative sketch for INDDoS rather than BACON Sketch and compared the DDoS detection performance in terms of simulations. SpreadSketch is three-dimensional as BACON Sketch, and we set the same size of $3 \times 1024 \times 1024$ for the two approaches.
The results are shown in Table \ref{table:stage}. When BACON Sketch is adopted, both recall and precision are slightly higher, and therefore the overall detection performance (i.e., F1 score) is also better.
This is because the query operation in SpreadSketch uses Multiresoultion Bitmap for flow cardinality estimation, while BACON Sketch uses Direct Bitmap. When there are not many collisions in the sketch, the Direct Bitmap adopted by BACON Sketch leads to slightly higher accuracy than Multiresoultion Bitmap. Additionally, the great advantage of BACON Sketch is that it can be fully executed in the data plane (for both update and query operations), while the query operation of SpreadSketch can only be executed by the controller. This leads to great advantages in terms of communication overhead and detection speed, as it will be shown in Section \ref{sec:comm_overhead}.
\subsection{Exp. 1: sensitivity analysis of DDoS victim identification}
In order to show how the performance of INDDoS is sensitive to different tuning parameters (number of hash functions $d$, output size $w$ and Bitmap register size $m$) in BACON Sketch, we conducted some experiments by varying tuning parameters individually, while respecting the resource constraints of our switch. 
The results are reported in \figurename~\ref{fig:sens}.

\subsubsection{Sensitivity to number of hash functions $d$}
\figurename~\ref{fig:sub:d} shows how INDDoS behaves with $d$ ranging from 1 to 3. 
Recall is high in all three cases, but precision increases as $d$ increases. 
This is because more hash functions lead to smaller number of collisions in the Count-min Sketch side of BACON Sketch, and hence to a lower overestimation of the number of source IPs contacting a specific destination. 
\subsubsection{Sensitivity to output size of hash functions $w$}
\figurename~\ref{fig:sub:w} shows how INDDoS performs by varying $w$ from 128 to 1024. 
Clearly, precision is very sensitive to $w$, and it significantly decreases as $w$ decreases. 
Conversely, even for small values of $w$ (e.g., $w = 128$), recall remains above 0.75. 
\subsubsection{Sensitivity to Bitmap register size $m$}
\figurename~\ref{fig:sub:m} shows how INDDoS performance changes along with the value of $m$. 
Since the threshold in each time interval is $\theta n$, which is around $0.5\% \cdot 60000 = 300$ in our evaluation, when $m$ is below 300, i.e., $m \in \{128, 256\}$, no victim can be detected and thus the F1 score is 0.
Precision of $m = 512$ and of $m = 1024$ is comparable, but recall is much lower for $m=512$. 
As proven theoretically, a smaller $m$ causes a smaller lower bound on the flow cardinality estimation $\hat{E}_{dst}$, so it is more difficult for $\hat{E}_{dst}$ to exceed the threshold. 

\subsection{Exp. 1: communication overhead and detection speed comparison with the state of the art} \label{sec:comm_overhead}
\begin{table}
 \caption{Communication overhead comparison}
\label{comm}
 \centering
\resizebox{0.6\linewidth}{!}{
\begin{tabular}{|c|c|}
\hline
\textbf{Strategy}& \textbf{Communication overhead (per $T_{int}$)} \\
\hline
INDDoS (BACON Sketch) &  ($4 \cdot N_{victim}$)B\\
\hline
SpreadSketch ($3\times 1024 \times 1024$) & 0.375MB \\
\hline
NetFlow (sampling rate 1/2000) &  1.15MB\\
\hline
\end{tabular}
}
\end{table}
In this subsection, we compare INDDoS with two state-of-the-art approaches, i.e., SpreadSketch \cite{tangspreadsketch} and Netflow \cite{Netflow}, in terms of communication overhead and detection speed. SpreadSketch is queried by the controller, while NetFlow is used to collect packet statistics at the controller side; in both cases, the DDoS detection logic is executed by the controller.

Table \ref{comm} reports a comparison in terms of communication overhead between the considered strategies with focus on a single time interval $T_{int}$.
The sketch size of BACON Sketch and SpreadSketch is set to $3\times 1024 \times 1024$ (see Section \ref{sec:evaluation_victim}).
However, the two approaches lead to very different overheads. INDDoS only reports detected victim IP addresses to the controller: considering that each IP address is 4 bytes long, the communication overhead is $(4 \cdot N_{victim})$B, where $N_{victim}$ is the number of detected DDoS victims by INDDoS in $T_{int}$.
Instead, SpreadSketch needs to send the whole sketch to the controller, at the end of $T_{int}$, for further processing (i.e., query on the sketch) and victim identification.
As each SpreadSketch cell occupies 1 bit, the overall size of the sketch is  $\frac{3\cdot 1024 \cdot 1024}{8}$B = 0.375MB. 
Concerning NetFlow, the considered sampling rate is 1/2000, being the recommended value for 10Gbps link speed \cite{netflow-sample} and the maximum rate not affecting DDoS detection accuracy \cite{10.1145/3229598.3229605}.
The average total traffic in each 5s time interval, for the considered flow trace, is 2.3GB, meaning that the average sampled communication overhead is around $\frac{2300}{2000}$MB $=$ 1.15MB.
With respect to the two existing approaches, the communication overhead generated by INDDoS is almost negligible, given that $N_{victim}$ is expected to be on average low in any $T_{int}$. If no victim is identified (i.e., $N_{victim}=0$) no communication is needed between the programmable switch and the controller.

Concerning detection speed, INDDoS reports to the controller immediately once a victim is detected and identified (see Fig. \ref{fig:arch}). Conversely, the other two methods require the execution of the DDoS detection logic by the controller at the end of every $T_{int}$: clearly, the longer $T_{int}$ is, the faster (on average) our approach is with respect to the state of the art.

\begin{table}[t]
\caption{INDDoS performance for Booter DDoS attacks}
\label{table:booterattack}
\centering
\begin{adjustbox}{width=0.4\textwidth}
\begin{tabular}{|c|c|c|c|}
\hline
\textbf{DDoS attack flow trace} & \textbf{Recall} &\textbf{Precision} & \textbf{F1 score}\\
\hline
Booter 6 & 1.0 (1/1) & 1.0 (1/1) & 1.0 (1/1) \\
\hline
Booter 7 & 1.0 (1/1) & 1.0 (1/1) & 1.0 (1/1) \\
\hline
Booter 1 & 1.0 (1/1) & 1.0 (1/1) & 1.0 (1/1) \\
\hline
Booter 4 & 1.0 (1/1) & 1.0 (1/1) & 1.0 (1/1) \\
\hline
Mixed & 1.0 (4/4) & 1.0 (4/4) & 1.0 (4/4) \\
\hline
\end{tabular}
\end{adjustbox}
\end{table}

\subsection{Exp. 2: evaluation of DDoS victim identification accuracy under Booter DDoS attacks}
Table~\ref{table:booterattack} shows the performance of INDDoS on victim identification under actual DDoS attacks with different number of attack source IPs. 
The threshold is approximately $60000 \cdot 1\% = 600$, which is larger than the largest flow cardinality for legitimate traffic in each 5s interval.
Using these parameters, INDDoS can always correctly identify the victim even though the number of attack source IPs in all cases is greater than the Bitmap size (1024), since our threshold is lower than that and a saturated count (i.e., all 1s in the Bitmap register) still results in flagging a destination as potential DDoS victim. 
Even though there are four distinct DDoS attacks in the Mixed case, INDDoS correctly identifies all four different victims.  
This suggests that, when a suitable threshold is used (insights on how to set it are given in Section \ref{sec:integration}), INDDoS can identify victims almost perfectly. 

Furthermore, we observe that in our CAIDA flow trace the maximum flow cardinality in 5s time intervals is around 500, whereas in 1s intervals it is about 260.
Therefore, shrinking the time interval not only allows INDDoS to detect attacks on faster networks, but it also increases the difference between legitimate and DDoS traffic, simplifying the selection of a threshold and increasing detection accuracy.

\subsection{Evaluation of impact on network performance}
\begin{table}
 \caption{Network performance of INDDoS in the commodity switch}
\label{qos}
 \centering
\resizebox{0.6\linewidth}{!}{
\begin{tabular}{|c|c|c|c|c|c|}
\hline
\multirow{2}{*}{\textbf{Type}} &\multirow{2}{*}{\textbf{iPerf size}} & \multirow{2}{*}{\textbf{Throughput}}  & \multirow{2}{*}{\textbf{Jitter}} & \multirow{2}{*}{\textbf{Packet loss}}& {\textbf{Average additional}} \\
~& ~ &~ &~ &~ & \textbf{processing time w.r.t.} \\
~& ~ &~ &~ &~ & \textbf{simple forwarding} \\
\hline
TCP & 64\,KB & 9.02\,Gbps & / & / & 106\,ns \\
\hline
TCP & 128\,KB & 9.41\,Gbps  & /&/  & 101\,ns\\
\hline
UDP & 500\,B & 1.03\,Gbps  & 0.003\,ms & 0\% & 102\,ns \\
\hline
UDP & 1470\,B & 2.87\,Gbps  & 0.004\,ms&0\% & 102\,ns \\
\hline
UDP & 3000\,B & 5.45\,Gbps  & 0.004\,ms &0\%& 107\,ns  \\
\hline
UDP & 6000\,B & 9.62\,Gbps  & 0.004\,ms&0\% & 104\,ns \\
\hline
UDP & 9000\,B & 9.67\,Gbps  & 0.006\,ms&0\%  & 101\,ns \\
\hline
\end{tabular}
}
\end{table}
We used iPerf3 \cite{iperf3} to indirectly measure the performance of INDDoS in the switch. 
The results are reported in Table~\ref{qos}. 
We first generated 10\,Gbps of TCP traffic with iPerf buffer size 64\,KB and 128\,KB from one server to the other across the switch. 
In both cases, the throughput reached more than 9\,Gbps, i.e., the TCP traffic could be processed at line-rate (with 10 Gbps interfaces). 
We then generated 10\,Gbps of UDP traffic with different iPerf sizes ranging from 500B to 9000B. 
Using datagrams smaller than 1470B (i.e., the typical value for home access) our server could not reach 10Gbps, being the server's CPU processing capacity a bottleneck in this settings. Therefore, we increased the \emph{maximum transmission unit} (MTU) to allow UDP datagram sizes larger than 1470B. 
The results show that the throughput and jitter increased as packet size increased, but without packet loss. 
When the datagram size reached 6000B, the throughput was 9.62Gbps, indicating that INDDoS could also process UDP datagrams at line-rate.
Although our testbed limitations did not allow us to test the switch behaviour at high packet processing rates, previous works \cite{10.1145/3359989.3365406} strongly suggest that performance would not be affected by INDDoS, as for any strategy that can be fully compiled and executed by the Tofino hardware. 

Additionally, we embedded two registers in our P4 program to monitor packet processing time. 
One was placed at the ingress pipeline, and stored the timestamp $t_{in}$ of when a packet entered the switch. 
The timestamp $t_{out}$ was recorded in another register at the egress pipeline when the packet had been processed. 
Per-packet processing time could thus be calculated as $t_{out} - t_{in}$.
This averaged around 200\,ns for simple forwarding, with INDDoS adding an additional 100\,ns or so on top of that, which we deem an acceptable time overhead.

\subsection{Evaluation of resource usage}
Table~\ref{resource} shows the switch resources required by our INDDoS and simple forwarding implementation. 
Considering that each stage can only apply a single atomic action, several stages must be used to carry out INDDoS calculations including simple forwarding, meaning that we require all available stages in the switch to achieve the best detection performance, whereas simple forwarding only requires a small fraction of them (16.67\%). 
Moreover, our strategy uses only 8.33\% of the total available SRAM.
This is because each register cell in BACON Sketch only occupies 1 bit, and the size of Count-min Sketch is small (i.e., only $3\times 1024$) though each of its register cells occupies 32 bits. 
INDDoS needs 56.25\% of total ALUs  for processing the packets. 
This is high since INDDoS not only needs to apply the match-action table for simple packet forwarding (4.2\% of all ALUs), but also maintains two sketches (BACON and auxiliary), which require more arithmetic operations. 
The packet header vector (PHV) size indicates the amounts of packet header information passed across the pipeline stages.
In our case, only 9.90\% of PHV is required for storing this information, compared to 7.30\% for simple forwarding, meaning that INDDoS does not embed much additional temporary metadata in the packet.

It is important to clarify that, even though INDDoS requires all available pipeline stages, it does not consume all the resources in each stage (as shown in the columns SRAM, No. ALUs and PHV size). 
This means that it would be possible to consolidate other tasks (e.g. DDoS mitigation as introduced in Section \ref{sec:ddos_mitigation}) in the same pipeline if properly developed.
Moreover, the relative cost of our solution is expected to be lower if deployed in next-generation programmable hardware (e.g. Tofino 2), which will provide more resources and stages.

\begin{table}
 \caption{Switch resource usage of INDDoS}
\label{resource}
 \centering
\resizebox{0.6\linewidth}{!}{
\begin{tabular}{|c|c|c|c|c|c|}
\hline
\textbf{Strategy}& \textbf{No. stages} & \textbf{SRAM}    & \textbf{No. ALUs} & \textbf{PHV size}\\
\hline
Simple forwarding & 16.67\%  & 2.5\%  & 4.2\% &  7.30\%\\
\hline
\begin{tabular}[x]{@{\vspace{0cm}}c@{}}INDDoS +\\ Simple forwarding\end{tabular} & 100\%  & 8.33\%   & 56.25\% &  9.90\%\\
\hline
\end{tabular}
}
\end{table}

\section{Related works} \label{sec:related}
In this section we describe recent related works and solutions on \emph{(i.)} DDoS detection in Software-Defined Networks, \emph{(ii.)} programmable data plane capabilities with ASICs and \emph{(iii.)} in-network monitoring using programmable switches.

\subsection{DDoS detection in the context of SDN}
Many techniques have already been proposed to detect various kinds of DDoS attacks in SDN networks.
A DDoS attack can be identified according to many different metrics, such as looking for a significant decrease of the normalized entropy in distinct destination IP addresses observed in the network \cite{giotis2014combining}\cite{kalkan2018jess}\cite{wang2015entropy}, or a large number of distinct flows (sequences of packets with the same source IPs) contacting a specific destination host (i.e., per-destination flow cardinality) \cite{yu2013software}\cite{liu2016one}\cite{huang2017sketchvisor}. 
Note that entropy-based DDoS detection can only detect DDoS attacks, but flow cardinality-based DDoS detection is also able to identify the DDoS victims, which allows operators to mitigate the impact on targeted nodes as soon as an attack is detected.
However, the state-of-the-art approaches \cite{yu2013software}\cite{liu2016one}\cite{huang2017sketchvisor} in SDN still need the controller to periodically retrieve the information from the switches for further processing. With our flow-cardinality-based INDDoS approach we make a step further: we exploit data-plane programmable switches to just forward the information on DDoS victims to the controller, offloading the DDoS detection task to the switch.

\subsection{Data plane programmable switches exploiting ASICs}
Most of the existing SDN switches come with very limited (or no) programmability with respect to the data plane functions that can be executed. 
To enable new kinds of functionalities (e.g. support of additional protocols) it is necessary to upgrade the hardware, which requires significant capital expenditure.
Recently, programmable ASICs have been introduced: they ensure standard data plane features (i.e., high-speed switching and forwarding) while offering the possibility of customizing functionalities, if properly programmed through domain-specific programming languages like P4 \cite{bosshart2014p4}.
For instance, programmable switches equipped with Tofino ASIC \cite{Tofino} can always forward packets at line-rate once the P4 program (including innovative features) is compiled and installed in the switches. 
Other programmable chips, like Network Interface Cards (NICs), Field Programmable Gate Arrays (FPGAs) and Network Processing Units (NPUs), cannot currently ensure high throughput and low latency on par with ASICs.
Additionally, in the context of network security, compared to highly-optimized software solutions, such as inline Intrusion Detection Systems (IDSs) \cite{krueger2008architecture}, the throughput ensured by ASICs is orders of magnitude higher and introduces much lower latency ($\sim 50\mu s - 1ms$) \cite{zhang2020poseidon}. This makes programmable ASICs well suited for the implementation of some network monitoring/security tasks, such as the DDoS detection strategy proposed in this paper.

\subsection{In-network monitoring tasks using programmable switches} 
Network monitoring has been widely studied, including in the context dealing with the capability of programmable switches.
Recently, researchers have started embedding network monitoring tasks directly into programmable switches, such as heavy hitter detection \cite{sivaraman2017heavy}, network traffic entropy estimation \cite{ding2019estimating} and entropy-based DDoS detection \cite{lapolli2019offloading}. 
Most of these solutions are based on sketches, probabilistic data structures to track summarized information pertaining large numbers of packets using fixed size memory. 
It has been proven that sketch-based monitoring solutions have a better accuracy/memory trade-off than sampling-based solution, at least over short time scales \cite{yang2018elastic}.
A common feature of these approaches is that the monitoring outcomes gathered from sketches are reported to the controller only when an anomalous event is detected, therefore overcoming the limitations of large communication overhead and latency caused by the interaction between data plane and control plane. 
Unfortunately, another common theme among these works is that, unlike ours, their P4 code was only tested in the (largely resource-unconstrained) P4 Behavioral model \cite{p4simulator} simulator.
Exceptions to this include Tang \emph{et al.}~\cite{tangspreadsketch}, who proposed SpreadSketch, a new sketch for DDoS detection implementable in Tofino-based switches. However, their solution relies on the controller querying the sketch for identifying the attacks, while in our proposal queries occur inside the switch, thus reducing switch-controller interaction needs. 
Dimolianis \emph{et al.} \cite{dimolianis2020multi} presented another in-network DDoS detection scheme working in actual Netronome SmartNICs \cite{netronome}.
Their approach measures three different features: total number of incoming traffic flows, subnet significance and packet symmetry. However, it is only able to identify the subnet under attack, which may limit the accuracy of deployed mitigation measures.

To the best of our knowledge, INDDoS is the first attempt to perform DDoS detection with host victim identification entirely in commodity switches equipped with programmable ASICs, while dealing with and overcoming all the constraints set by the hardware.

\section{Conclusions}\label{sec:conclusion}
In this paper, we proposed a novel in-network DDoS victim identification strategy, INDDoS, based on a new probabilistic data structure we named BACON Sketch, which combines a Direct Bitmap and a Count-min Sketch to estimate the number of distinct flows contacting the same destination.
INDDoS uses a threshold-based rule to identify victims directly in the programmable data plane of switches.
We proved some parametric bounds on the quality of estimations produced by BACON and INDDoS, and implemented them using the P4 language and toolchain in a Edgecore commodity switch with Tofino ASIC.
The analysis of the performance of our solution proves that it can precisely and accurately identify DDoS victims without adversely affecting the packet processing capabilities of the switch.
Moreover, this approach only reports to the controller when a new victim is detected, greatly reducing the communication strain on the monitoring infrastructure.
This work contributes to the ongoing DDoS attack detection and mitigation activities carried on in the GN4-3 project for upgrading the pan-European GÉANT network.


\section*{Acknowledgement}
The research leading to these results has received funding from the GN4-3 project, within the European H2020 R\&I program, Grant Agreement No. 856726. We also want to thank Barefoot Networks for their valuable support.

\bibliographystyle{IEEEtran}
\bibliography{ref}
\vfill

\end{document}